\newcommand\ie{{\em i.e.}~}
\newcommand\etc{{\em etc.}}
\newcommand\apriori{{\em a priori}~}
\def\A{\mathcal A}
\def\B{\mathscr B}
\def\C{\mathbb C}
\def\D{\mathscr D}
\def\G{\mathcal G}
\def\H{\mathcal H}
\def\K{\mathscr K}
\def\N{\mathbb N}
\def\R{\mathbb R}
\def\S{\mathscr S}
\def\vg{v_-}
\def\vd{v_+}
\def\dom{\mathcal D}
\def\ltwo{\mathsf{L}^{\:\!\!2}}
\def\e{\mathop{\mathrm{e}}\nolimits}
\def\slim{\mathop{\hbox{\rm s-}\lim}\nolimits}
\def\re{\mathop{\mathsf{Re}}\nolimits}
\def\id{\mathrm{id}_\mathbb{R}}
\def\Ran{\mathop{\mathsf{Ran}}\nolimits}
\def\Ker{\mathop{\mathsf{Ker}}\nolimits}
\newtheorem{Theorem}{Theorem}[section]
\newtheorem{Remark}[Theorem]{Remark}
\newtheorem{Lemma}[Theorem]{Lemma}
\newtheorem{Assumption}[Theorem]{Assumption}
\newtheorem{Corollary}[Theorem]{Corollary}
\newtheorem{Proposition}[Theorem]{Proposition}
\begin{document}


\title{A few results on Mourre theory in a two-Hilbert spaces setting}

\author{S. Richard$^1$\footnote{On leave from Universit\'e de Lyon; Universit\'e
Lyon 1; CNRS, UMR5208, Institut Camille Jordan, 43 blvd du 11 novembre 1918, F-69622
Villeurbanne-Cedex, France.
Supported by the Japan Society for the Promotion of Science (JSPS) and by
``Grants-in-Aid for scientific Research''.}~~and R. Tiedra de
Aldecoa$^2$\footnote{Supported by the Fondecyt Grant 1090008 and by the Iniciativa
Cientifica Milenio ICM P07-027-F ``Mathematical Theory of Quantum and Classical
Magnetic Systems''.}}

\date{\small}
\maketitle \vspace{-1cm}

\begin{quote}
\emph{
\begin{itemize}
\item[$^1$] Graduate School of Pure and Applied Sciences,
University of Tsukuba, \\
1-1-1 Tennodai,
Tsukuba, Ibaraki 305-8571, Japan
\item[$^2$] Facultad de Matem\'aticas, Pontificia Universidad Cat\'olica de Chile,\\
Av. Vicu\~na Mackenna 4860, Santiago, Chile
\item[] \emph{E-mails:} richard@math.univ-lyon1.fr, rtiedra@mat.puc.cl
\end{itemize}
}
\end{quote}


\begin{abstract}
We introduce a natural framework for dealing with Mourre theory in an abstract
two-Hilbert spaces setting. In particular a Mourre estimate for a pair of
self-adjoint operators $(H,A)$ is deduced from a similar estimate for a pair of
self-adjoint operators $(H_0,A_0)$ acting in an auxiliary Hilbert space. A new
criterion for the completeness of the wave operators in a two-Hilbert spaces setting
is also presented.
\end{abstract}

\textbf{2000 Mathematics Subject Classification:} 81Q10, 47A40, 46N50, 47B25, 47B47.

\smallskip

\textbf{Keywords:} Mourre theory, two-Hilbert spaces, conjugate operator, scattering
theory


\section{Introduction}
\setcounter{equation}{0}

It is commonly accepted that Mourre theory is a very powerful tool in spectral and
scattering theory for self-adjoint operators. In particular, it naturally leads to
limiting absorption principles which are essential when studying the absolutely
continuous part of self-adjoint operators. Since the pioneering work of E.~Mourre
\cite{M80}, a lot of improvements and extensions have been proposed, and the theory
has led to numerous applications. However, in most of the corresponding works,
Mourre theory is presented in a one-Hilbert space setting and perturbative arguments
are used within this framework. In this paper, we propose to extend the theory to a
two-Hilbert spaces setting and present some results in that direction. In
particular, we show how a Mourre estimate can be deduced for a pair of self-adjoint
operators $(H,A)$ in a Hilbert space $\H$ from a similar estimate for a pair of
self-adjoint operators $(H_0,A_0)$ in a auxiliary Hilbert space $\H_0$.

The main idea of E. Mourre for obtaining results on the spectrum $\sigma(H)$ of a
self-adjoint operator $H$ in a Hilbert space $\H$ is to find an auxiliary
self-adjoint operator $A$ in $\H$ such that the commutator $[iH,A]$ is positive when
localised in the spectrum of $H$. Namely, one looks for a subset
$I\subset\sigma(H)$, a number $a\equiv a(I)>0$ and a compact operator $K\equiv K(I)$
in $\H$ such that
\begin{equation}\label{Mourre}
E^H(I)[iH,A]E^H(I)\ge aE^H(I)+K,
\end{equation}
where $E^H(I)$ is the spectral projection of $H$ on $I$. Such an estimate is commonly
called a Mourre estimate. In general, this positivity condition is obtained via
perturbative technics. Typically, $H$ is a perturbation of a simpler operator $H_0$
in $\H$ for which the commutator $[iH_0,A]$ is easily computable and the positivity
condition easily verifiable. In such a case, the commutator of the formal difference
$H-H_0$ with $A$ can be considered as a small perturbation of $[iH_0,A]$, and one can
still infer the necessary positivity of $[iH,A]$.

In many other situations one faces the problem that $H$ is not the perturbation of
any simpler operator $H_0$ in $\H$. For example, if $H$ is the Laplace-Beltrami
operator on a non-compact manifold, there is no candidate for a simpler operator
$H_0$! Alternatively, for multichannel scattering systems, there might exist more
than one single candidate for $H_0$, and one has to take this multiplicity into
account. In these situations, it is therefore unclear from the very beginning wether
one can find a suitable conjugate operator $A$ for $H$ and how some positivity of
$[iH,A]$ can be deduced from a hypothetic similar condition involving a simpler
operator $H_0$. Of course, these interrogations have found positive answers in
various situations. Nevertheless, it does not seem to the authors that any general
framework has yet been proposed.

The starting point for our investigations is the scattering theory in the two-Hilbert
spaces setting. In this setup, one has a self-adjoint operator $H$ in a Hilbert
space $\H$, and one looks for a simpler self-adjoint operator $H_0$ in an auxiliary
Hilbert space $\H_0$ and a bounded operator $J:\H_0\to\H$ such that the strong limits
$$
\slim_{t\to\pm\infty}\e^{itH}J\e^{-itH_0}\varphi
$$
exist for suitable vectors $\varphi\in\H_0$. If such limits exist for enough
$\varphi\in\H_0$, then some information on the spectral nature of $H$ can be inferred
from similar information on the spectrum of $H_0$. We refer to the books \cite{BW83}
and \cite{Yaf92} for general presentations of scattering theory in the two-Hilbert spaces
setting. Therefore, the following question naturally arises: If $A_0$ is a
conjugate operator for $H_0$ such that \eqref{Mourre} holds with $(H_0,A_0)$ instead
of $(H,A)$, can we define a conjugate operator $A$ for $H$ such that \eqref{Mourre}
holds\;\!? Under suitable conditions, the answer is ``yes'', and its justification is
the content of this paper. In fact, we present a general framework in which a Mourre
estimate for a pair $(H,A)$ can be deduced from a similar Mourre estimate for a pair
$(H_0,A_0)$. In that framework, we suppose the operators $A_0$ and $A$ given
\apriori, and then exhibit sufficient conditions on the formal commutators $[iH,A]$
and $[iH_0,A_0]$ guaranteeing the existence of a Mourre estimate for $(H,A)$ if a
Mourre estimate for $(H_0,A_0)$ is verified (see the assumptions of Theorem
\ref{fonctionrho}). We also show how a conjugate operator $A$ for $H$ can be
constructed from a conjugate operator $A_0$ for $H_0$.

Let us finally sketch the organisation of the paper. In Section \ref{Sec_one}, we
recall a few definitions (borrowed from \cite[Chap.~7]{ABG}) in relation with Mourre
theory in the usual one-Hilbert space setting. In Section \ref{sec_two}, we state our
main result, Theorem \ref{fonctionrho}, on the obtention of a Mourre estimate for
$(H,A)$ from a similar estimate for $(H_0,A_0)$. A complementary result on  higher
order regularity of $H$ with respect to $A$ is also presented. In the second part of
Section \ref{sec_two}, we show how the assumptions of Theorem \ref{fonctionrho} can
be checked for short-range type and long-range type perturbations (note that the
distinction between short-range type and long-range type perturbations is more subtle
here, since $H_0$ and $H$ do not live in the same Hilbert space). We also show how a
natural candidate for $A$ can be constructed from $A_0$. In Section \ref{Example}, we
illustrate our results with the simple example of one-dimensional
Schr\"odinger operator with steplike potential. A more challenging application on
manifolds will be presented in \cite{IRT11} (many other applications such as curved
quantum waveguides, anisotropic Schr\"odinger operators, spin models, \etc~are also
conceivable). Finally, in Section \ref{Sec_Comp} we prove an auxiliary result on the
completeness of the wave operators in the two-Hilbert spaces setting without assuming
that the initial sets of the wave operators are equal to the subspace
$\H_{\rm ac}(H_0)$ of absolute continuity of $H_0$ (in \cite{BW83} and \cite{Yaf92},
only that case is presented and this situation is sometimes too restrictive as will
be shown for example in \cite{IRT11}).

\section{Mourre theory in the one-Hilbert space setting}\label{Sec_one}
\setcounter{equation}{0}

In this section we recall some definitions related to Mourre theory, such as the
regularity condition of $H$ with respect to $A$, providing a precise meaning to the
commutators mentioned in the Introduction. We refer to \cite[Sec.~7.2]{ABG} for more
information and details.

Let us consider a Hilbert space $\H$ with scalar product
$\langle\;\!\cdot\;\!,\;\!\cdot\;\!\rangle_\H$ and norm $\|\;\!\cdot\;\!\|_\H$. Let
also $H$ and $A$ be two self-adjoint operators in $\H$, with domains $\dom(H)$ and
$\dom(A)$. The spectrum of $H$ is denoted by $\sigma(H)$ and its spectral measure by
$E^H(\;\!\cdot\;\!)$. For shortness, we also use the notation
$E^H(\lambda;\varepsilon):=E^H\big((\lambda-\varepsilon,\lambda+\varepsilon)\big)$
for all $\lambda\in\R$ and $\varepsilon>0$.

The operator $H$ is said to be of class $C^1(A)$ if there exists
$z\in\C\setminus\sigma(H)$ such that the map
\begin{equation}\label{C1}
\R\ni t\mapsto\e^{-itA}(H-z)^{-1}\e^{itA}\in\B(\H)
\end{equation}
is strongly of class $C^1$ in $\H$. In such a case, the set $\dom(H)\cap\dom(A)$ is a
core for $H$ and the quadratic form
$
\dom(H)\cap\dom(A)\ni\varphi
\mapsto\langle H\varphi,A\varphi\rangle_\H-\langle A\varphi,H\varphi\rangle_\H
$
is continuous in the topology of $\dom(H)$. This form extends then uniquely to a
continuous quadratic form $[H,A]$ on $\dom(H)$, which can be identified with a
continuous operator from $\dom(H)$ to the adjoint space $\dom(H)^*$. Furthermore, the
following equality holds:
$$
\big[A,(H-z)^{-1}\big]=(H-z)^{-1}[H,A](H-z)^{-1}.
$$
This $C^1(A)$-regularity of $H$ with respect to $A$ is the basic ingredient for any
investigation in Mourre theory. It is also at the root of the proof of the Virial
Theorem (see for example \cite[Prop.~7.2.10]{ABG} or \cite{GG99}).

Note that if $H$ is of class $C^1(A)$ and if $\eta \in C^\infty_{\rm c}(\R)$ (the set
of smooth functions on $\R$ with compact support), then the quadratic form
$
\dom(A)\ni\varphi\mapsto\langle\bar\eta(H)\varphi,A\varphi\rangle_\H
-\langle A\varphi,\eta(H)\varphi\rangle_\H
$
also extends uniquely to a continuous quadratic form $[\eta(H)A,]$ on $\H$,
identified with a bounded operator on $\H$.

We now recall the definition of two very useful functions in Mourre theory described
in \cite[Sec.~7.2]{ABG}. For that purpose, we use the following notations: for two
bounded operators $S$ and $T$ in a common Hilbert space we write $S\approx T$ if
$S-T$ is compact, and we write $S\lesssim T$ if there exists a compact operator $K$
such that $S\le T+K$. If $H$ is of class $C^1(A)$ and $\lambda\in\R$ we set
$$
\varrho^A_H(\lambda)
:=\sup\big\{a\in\R \mid\exists\;\!\varepsilon>0~\,\hbox{s.t.}~\,a\;\!
E^H(\lambda;\varepsilon)\le E^H(\lambda;\varepsilon)[iH,A]
E^H(\lambda;\varepsilon)\big\}.
$$
A second function, more convenient in applications, is
$$
\widetilde\varrho^A_H(\lambda)
:=\sup\big\{a\in\R \mid\exists\;\!\varepsilon>0~\,\hbox{s.t.}~\,a\;\!
E^H(\lambda;\varepsilon)\lesssim E^H(\lambda;\varepsilon)[iH,A]
E^H(\lambda;\varepsilon)\big\}.
$$
Note that the following equivalent definition is often useful:
\begin{equation}\label{autre}
\widetilde\varrho^A_H(\lambda)
=\sup\big\{a\in\R\mid\exists\;\!\eta\in C^\infty_{\rm c}(\R)
\hbox{ real}~\,\hbox{s.t.}~\,\eta(\lambda)\neq0,~a\;\!\eta(H)^2\lesssim
\eta(H)[iH,A]\eta(H)\big\}.
\end{equation}
It is commonly said that $A$ is conjugate to $H$ at the point $\lambda\in\R$ if
$\widetilde\varrho^A_H(\lambda)>0$, and that $A$ is strictly conjugate to $H$ at
$\lambda$ if $\varrho^A_H(\lambda)>0$. Furthermore, the function
$\widetilde\varrho^A_H:\R\to(-\infty,\infty]$ is lower semicontinuous and satisfies
$\widetilde\varrho^A_H(\lambda)<\infty$ if and only if $\lambda$ belongs to the
essential spectrum $\sigma_{\rm ess}(H)$ of $H$. One also has
$\widetilde\varrho^A_H(\lambda)\ge\varrho^A_H(\lambda)$ for all $\lambda\in\R$.

Another property of the function $\widetilde\varrho$, often used in the one-Hilbert space
setting, is its stability under a large class of perturbations: Suppose that $H$ and
$H'$ are self-adjoint operators in $\H$ and that both operators $H$ and $H'$ are of
class $C^1_{\rm u}(A)$, \ie such that the map \eqref{C1} is $C^1$ in norm. Assume
furthermore that the difference $(H-i)^{-1}-(H'-i)^{-1}$ belongs to $\K(\H)$, the
algebra of compact operators on $\H$. Then, it is proved in \cite[Thm.~7.2.9]{ABG}
that $\widetilde\varrho^A_{H'}=\widetilde\varrho^A_H$, or in other words that $A$ is
conjugate to $H'$ at a point $\lambda\in\R$ if and only if $A$ is conjugate to $H$ at
$\lambda$.

Our first contribution in this paper is to extend such a result to the two-Hilbert spaces
setting. But before this, let us recall the importance of the set
$\widetilde\mu^A(H)\subset\R$ on which $\widetilde\varrho^A_H(\;\!\cdot\;\!)>0$: if
$H$ is slightly more regular than $C^1(A)$, then $H$ has locally at most a finite
number of eigenvalues on $\widetilde\mu^A(H)$ (multiplicities counted), and $H$ has
no singularly continuous spectrum on $\widetilde\mu^A(H)$ (see \cite[Thm.~7.4.2]{ABG}
for details).

\section{Mourre theory in the two-Hilbert spaces setting}\label{sec_two}
\setcounter{equation}{0}

From now on, apart from the triple $(\H,H,A)$ of Section \ref{Sec_one}, we consider
a second triple $(\H_0,H_0,A_0)$ and an identification operator $J:\H_0\to\H$. The
existence of two such triples is quite standard in scattering theory, at least for
the pairs $(\H,H)$ and $(\H_0,H_0)$ (see for instance the books \cite{BW83,Yaf92}).
Part of our goal in what follows is to show that the existence of the conjugate
operators $A$ and $A_0$ is also natural, as was realised in the context of scattering
on manifolds \cite{IRT11}.

So, let us consider a second Hilbert space $\H_0$ with scalar product
$\langle\;\!\cdot\;\!,\;\!\cdot\;\!\rangle_{\H_0}$ and norm
$\|\;\!\cdot\;\!\|_{\H_0}$. Let also $H_0$ and $A_0$ be two self-adjoint operators in
$\H_0$, with domains $\dom(H_0)$ and $\dom(A_0)$. Clearly, the $C^1(A_0)$-regularity
of $H_0$ with respect to $A_0$ can be defined as before, and if $H_0$ is of class
$C^1(A_0)$ then the definitions of the two functions $\varrho^{A_0}_{H_0}$ and
$\widetilde \varrho^{A_0}_{H_0}$ hold as well.

In order to compare the two triples, it is natural to require the existence of a map
$J\in\B(\H_0,\H)$ having some special properties (for example, the ones needed for
the completeness of the wave operators, see Section \ref{Sec_Comp}). But for the time
being, no additional information on $J$ is necessary. In the one-Hilbert space setting,
the operator $H$ is typically a perturbation of the simpler operator $H_0$. And as
mentioned above, the stability of the function $\widetilde\varrho_{H_0}^{A_0}$ is an
efficient tool to infer information on $H$ from similar information on $H_0$. In the
two-Hilbert spaces setting, we are not aware of any general result allowing the
computation of the function $\widetilde\varrho_H^A$ in terms of the function
$\widetilde\varrho_{H_0}^{A_0}$. The obvious reason for this being the impossibility
to consider $H$ as a direct perturbation of $H_0$ since these operators do not live
in the same Hilbert space. Nonetheless, the next theorem gives a result in that
direction:

\begin{Theorem}\label{fonctionrho}
Let $(\H,H,A)$ and $(\H_0,H_0,A_0)$ be as above, and assume that
\begin{enumerate}
\item[(i)] the operators $H_0$ and $H$ are of class $C^1(A_0)$ and $C^1(A)$,
respectively,
\item[(ii)] for any $\,\eta\in C^\infty_{\rm c}(\R)$ the difference of bounded
operators $J[iA_0,\eta(H_0)]J^*-[iA,\eta(H)]$ belongs to $\K(\H)$,
\item[(iii)] for any $\,\eta\in C^\infty_{\rm c}(\R)$ the difference
$J\eta(H_0)-\eta(H)J$ belongs to $\K(\H_0,\H)$,
\item[(iv)] for any $\,\eta\in C^\infty_{\rm c}(\R)$ the operator
$\eta(H)(JJ^*-1)\eta(H)$ belongs to $\K(\H)$.
\end{enumerate}
Then, one has $\widetilde\varrho_H^A\ge\widetilde\varrho_{H_0}^{A_0}$. In particular,
if $A_0$ is conjugate to $H_0$ at $\lambda\in\R$, then $A$ is conjugate to $H$ at
$\lambda$.
\end{Theorem}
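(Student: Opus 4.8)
The plan is to work from the equivalent definition \eqref{autre} of $\widetilde\varrho$ and to reduce the problem, in each of the two spaces, to a \emph{bounded}, \emph{transportable} quantity. Fix $\lambda\in\R$ and let $a<\widetilde\varrho^{A_0}_{H_0}(\lambda)$. By \eqref{autre} there is a real $\eta\in C^\infty_{\rm c}(\R)$ with $\eta(\lambda)\neq0$ and $a\,\eta(H_0)^2\lesssim\eta(H_0)[iH_0,A_0]\eta(H_0)$ in $\H_0$. I would aim to produce the same inequality with $(H_0,A_0)$ replaced by $(H,A)$ and the same $\eta$, which by \eqref{autre} yields $a\le\widetilde\varrho^A_H(\lambda)$; taking the supremum over such $a$ then gives the theorem.

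The key preliminary step — which I expect to be the main obstacle — is to rewrite the localised commutator $\eta(H)[iH,A]\eta(H)$, which enters \eqref{autre} but which involves the unbounded, non-transportable form $[iH,A]$, in terms of a bounded operator of the type appearing in hypothesis (ii). I would choose a real $\theta\in C^\infty_{\rm c}(\R)$ with $\theta\equiv1$ on $\supp\eta$ and set $h(t):=t\,\theta(t)^2\in C^\infty_{\rm c}(\R)$. Writing $H=h(H)+p(H)$ with $p(t):=t\big(1-\theta(t)^2\big)$, one has $p\,\eta\equiv0$, hence $p(H)\eta(H)=0$, and linearity of the commutator form gives, after sandwiching by $\eta(H)$,
$$
\eta(H)[iH,A]\eta(H)=\eta(H)[ih(H),A]\eta(H)=-\eta(H)[iA,h(H)]\eta(H),
$$
since the $p$-contribution $\eta(H)[ip(H),A]\eta(H)$ vanishes identically (both $\langle p(H)\psi,A\psi\rangle$ and $\langle A\psi,p(H)\psi\rangle$ are zero for $\psi=\eta(H)\varphi$). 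As $H\in C^1(A)$ and $h\in C^\infty_{\rm c}(\R)$, the operator $[iA,h(H)]$ is bounded; the identical computation in $\H_0$, with the same $h$, turns the starting inequality into $a\,\eta(H_0)^2\lesssim-\eta(H_0)[iA_0,h(H_0)]\eta(H_0)$.

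It then remains to transport this inequality from $\H_0$ to $\H$, which is where hypotheses (ii)--(iv) enter and which is essentially bounded-operator bookkeeping modulo $\K$. Conjugating by $J$ and writing $C_1:=J\eta(H_0)-\eta(H)J\in\K(\H_0,\H)$ (hypothesis (iii)) together with its adjoint, I would commute the two outer factors $\eta(H_0)$ across $J$ and $J^*$ to obtain
$$
J\,\eta(H_0)[iA_0,h(H_0)]\eta(H_0)\,J^*=\eta(H)\big(J[iA_0,h(H_0)]J^*\big)\eta(H)+(\text{compact}),
$$
and likewise $J\eta(H_0)^2J^*=\eta(H)JJ^*\eta(H)+(\text{compact})$. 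Hypothesis (ii), applied to the function $h$, replaces $J[iA_0,h(H_0)]J^*$ by $[iA,h(H)]$ modulo $\K(\H)$, while hypothesis (iv) replaces $\eta(H)JJ^*\eta(H)$ by $\eta(H)^2$ modulo $\K(\H)$. Combining these with the bridge identity above, the transported inequality reads $a\,\eta(H)^2\lesssim-\eta(H)[iA,h(H)]\eta(H)=\eta(H)[iH,A]\eta(H)$, which is exactly the estimate \eqref{autre} needed to conclude $a\le\widetilde\varrho^A_H(\lambda)$. The only points demanding genuine care are the domain and core issues required to justify the form manipulations — handled by the $C^1$-hypothesis (i), which guarantees that $\eta(H)$ and $\eta(H_0)$ map into the relevant domains and that all commutators extend to bounded operators — and the routine verification that each discarded remainder indeed lies in $\K$.
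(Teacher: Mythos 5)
Your argument is correct, and its overall architecture coincides with the paper's: reduce via \eqref{autre} to a localised estimate, rewrite the localised commutator in terms of bounded commutators of the kind covered by hypothesis (ii), and transport the inequality through $J$ using (ii)--(iv); in particular your transport step, including the chain $J\eta(H_0)^2J^*\approx\eta(H)JJ^*\eta(H)\approx\eta(H)^2$ and the fact that conjugation by $J$ preserves inequalities modulo compacts, is exactly the paper's \eqref{Etwo}. The one genuine difference is the bridge identity. The paper quotes \cite[Eq.~7.2.18]{ABG}, namely $\eta(H)[iA,H]\eta(H)=[iA,\eta_2(H)]-2\re\big\{[iA,\eta(H)]\eta_1(H)\big\}$ with $\eta_1(x)=x\eta(x)$ and $\eta_2(x)=x\eta(x)^2$, and then applies (ii) to the two functions $\eta_2$ and $\eta$ together with (iii) to move $\eta_1(H)$ past $J^*$. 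You instead take a cutoff $\theta\equiv 1$ on $\supp\eta$, set $h=\id\,\theta^2$ and $p=\id\,(1-\theta^2)$, and use $p(H)\eta(H)=0$ to obtain $\eta(H)[iH,A]\eta(H)=-\eta(H)[iA,h(H)]\eta(H)$, so that (ii) is applied to the single function $h$ and (iii) only serves to pull the outer $\eta$'s through $J$ and $J^*$. Both identities are legitimate: yours is self-contained but requires the extra domain fact that $\eta(H)\dom(A)\subset\dom(A)$ (which does follow from $H\in C^1(A)$, so the form of $[iH,A]$ may be evaluated on $\eta(H)\dom(A)$ and the $p$-contribution genuinely vanishes), whereas the paper's version manipulates only forms $[iA,\varphi(H)]$ with $\varphi$ bounded and never evaluates $[iH,A]$ on specific vectors. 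Each discarded remainder in your bookkeeping is compact for the reasons you state, so the conclusion $a\,\eta(H)^2\lesssim\eta(H)[iH,A]\eta(H)$, and hence $\widetilde\varrho^A_H(\lambda)\ge a$, follows as claimed.
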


Note that with the notations introduced in the previous section, Assumption (ii)
reads $J[iA_0,\eta(H_0)]J^*\approx[iA,\eta(H)]$. Furthermore, since the vector space
generated by the family of functions $\{(\;\cdot\;-z)^{-1}\}_{z\in\C\setminus\R}$ is
dense in $C_0(\R)$ and the set $\K(\H_0,\H)$ is closed in $\B(\H_0,\H)$, the
condition $J(H_0-z)^{-1}-(H-z)^{-1}J \in\K(\H_0,\H)$ for all $z\in\C\setminus\R$
implies Assumption (iii) (here, $C_0(\R)$ denotes the set of continuous functions on
$\R$ vanishing at $\pm\infty$).

\begin{proof}
Let $\eta\in C^\infty_{\rm c}(\R;\R)$, and define
$\eta_1,\eta_2\in C^\infty_{\rm c}(\R;\R)$ by $\eta_1(x):=x\;\!\eta(x)$ and
$\eta_2(x):=x\;\!\eta(x)^2$. Under Assumption (i), it is shown in
\cite[Eq.~7.2.18]{ABG}
that
$$
\eta(H)[iA,H]\eta(H)=[iA,\eta_2(H)]-2\re\big\{[iA,\eta(H)]\eta_1(H)\big\}.
$$
Therefore, one infers from Assumptions (ii) and (iii) that
\begin{align*}
&\eta(H)[iA,H]\eta(H)\\
&\approx J[iA_0,\eta_2(H_0)]J^*-2\re\big\{J[iA_0,\eta(H_0)]J^*\eta_1(H)\big\}\\
&=J[iA_0,\eta_2(H_0)]J^*-2\re\big\{J[iA_0,\eta(H_0)]\eta_1(H_0)J^*\big\}
-2\re\big\{J[iA_0,\eta(H_0)]\big(J^*\eta_1(H)-\eta_1(H_0)J^*\big)\big\}\\
&\approx J[iA_0,\eta_2(H_0)]J^*-2J\re\big\{[iA_0,\eta(H_0)]\eta_1(H_0)\big\}J^*\\
&=J\eta(H_0)[iA_0,H_0]\eta(H_0)J^*,
\end{align*}
which means that
\begin{equation}\label{Eone}
\eta(H)[iA,H]\eta(H)\approx J\eta(H_0)[iA_0,H_0]\eta(H_0)J^*.
\end{equation}
Furthermore, if $a\in\R$ is such that
$\eta(H_0)[iA_0,H_0]\eta(H_0)\gtrsim a \eta(H_0)^2$, then Assumptions (iii) and (iv)
imply that
\begin{equation}\label{Etwo}
J\eta(H_0)[iA_0,H_0]\eta(H_0)J^*
\gtrsim aJ\eta(H_0)^2J^*
\approx a\eta(H)JJ^*\eta(H)
\approx a\eta(H)^2.
\end{equation}
Thus, one obtains $\eta(H)[iA,H]\eta(H)\gtrsim a\eta(H)^2$ by combining \eqref{Eone}
and \eqref{Etwo}. This last estimate, together with the definition \eqref{autre} of
the functions $\widetilde\varrho_{H_0}^{A_0}$ and $\widetilde\varrho_H^A$ in terms of
the localisation function $\eta$, implies the claim.
\end{proof}

As mentioned in the previous sections, the $C^1(A)$-regularity of $H$ and the Mourre
estimate are crucial ingredients for the analysis of the operator $H$, but they are
in general not sufficient. For instance, the nature of the spectrum of $H$ or the
existence and the completeness of the wave operators is usually proved under a
slightly stronger $C^{1,1}(A)$-regularity condition of $H$. It would certainly be
valuable if this regularity condition could be deduced from a similar information on
$H_0$. Since we have not been able to obtain such a result, we simply refer to
\cite{ABG} for the definition of this class of regularity and present below a coarser
result. Namely, we show that the regularity condition ``$H$ is of class $C^n(A)$''
can be checked by means of explicit computations involving only $H$ and not its
resolvent. For simplicity, we present the simplest, non-perturbative version of the
result; more refined statements involving perturbations as in Sections \ref{S1} and
\ref{S2} could also be proved.

For that purpose, we first recall that $H$ is of class $C^n(A)$ if the map \eqref{C1}
is strongly of class $C^n$. We also introduce the following slightly more general
regularity class: Assume that $(\G,\H)$ is a Friedrichs couple, \ie a pair $(\G,\H)$
with $\G$ a Hilbert space densely and continuously embedded in $\H$. Assume
furthermore that the unitary group $\{\e^{itA}\}_{t\in\R}$ leaves $\G$ invariant, so
that the restriction of this group to $\G$ generates a $C_0$-group, with generator
also denoted by $A$. In such a situation, an operator $T\in\B(\G,\H)$ is said to
belong to $C^n(\A;\G,\H)$ if the map
$$
\R\ni t\mapsto\e^{-itA}T\e^{itA}\in\B(\G,\H)
$$
is strongly of class $C^n$. Similar definitions hold with $T$ in $\B(\H,\G)$, in
$\B(\G,\G)$ or in $\B(\H,\H)$ (in the latter case, one simply writes $T\in C^n(A)$
instead of $T \in C^n(A;\H,\H)$).

The next proposition (which improves slightly the result of \cite[Lemma~1.2]{MW11})
is an extension of \cite[Thm.~6.3.4.(c)]{ABG} to higher orders of regularity of $H$
with respect to $A$. We use for it the notation $\G$ for the domain $\dom(H)$ of $H$
endowed with its natural Hilbert space structure. We also recall that if $H$ is of
class $C^1(A)$, then $[iH,A]$ can be identified with a bounded operator from $\G$ to
$\G^*$. It has been proved in \cite[Lemma~2]{GG99} that if this operator maps $\G$
into $\H$, then $\{\e^{itA}\}_{t\in\R}$ leaves $\G$ invariant, and thus one has a
$C_0$-group in $\G$.

\begin{Proposition}\label{reguln}
Let $H$ be of class $C^1(A)$, assume that $[iH,A]\in\B(\G,\H)$ and suppose that
$[iH,A]\in C^n(A;\G,\H)$ for some integer $n\ge0$. Then
$(H-z)^{-1}\in C^{n+1}(A;\H,\G)\subset C^{n+1}(A)$ for any $z\in\C\setminus\R$.
\end{Proposition}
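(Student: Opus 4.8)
The plan is to proceed by induction on $n\ge0$, the base case $n=0$ being precisely \cite[Thm.~6.3.4.(c)]{ABG}. Throughout I write $R(z):=(H-z)^{-1}$ and $B:=[iH,A]$. Since $H$ is of class $C^1(A)$ and $B\in\B(\G,\H)$ by hypothesis, \cite[Lemma~2]{GG99} guarantees that $\{\e^{itA}\}_{t\in\R}$ leaves $\G$ invariant; consequently all the mixed regularity classes $C^k(A;\;\!\cdot\;\!,\;\!\cdot\;\!)$ between $\G$ and $\H$ are meaningful, and $R(z)$ may legitimately be viewed as an element of $\B(\H,\G)$. For $n=0$, \cite[Thm.~6.3.4.(c)]{ABG} already yields $R(z)\in C^1(A;\H,\G)$, which is the assertion in this case.

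The inductive engine is the first-order identity recorded in Section~\ref{Sec_one}, namely $[A,R(z)]=R(z)[H,A]R(z)$. Since $[H,A]=-iB$, this shows that, up to a multiplicative constant, the strong derivative of the map $t\mapsto\e^{-itA}R(z)\e^{itA}$ is the triple product $R(z)BR(z)$. Regarded as a map from $\H$ to $\G$, this product factors as $\H\xrightarrow{R(z)}\G\xrightarrow{B}\H\xrightarrow{R(z)}\G$, i.e. as the composition of two copies of $R(z)\in\B(\H,\G)$ with the middle factor $B\in\B(\G,\H)$. Now assume the statement for $n-1$ and suppose $B\in C^n(A;\G,\H)$. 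Because $C^n(A;\G,\H)\subset C^{n-1}(A;\G,\H)$, the induction hypothesis applies and gives $R(z)\in C^n(A;\H,\G)$. I would then invoke the composition rule for these classes---valid because the product of two maps of class $C^k$ is again of class $C^k$ between the appropriate source and target spaces, with derivatives given by the usual Leibniz formula---applying it twice: from $B\in C^n(A;\G,\H)$ and $R(z)\in C^n(A;\H,\G)$ one gets $BR(z)\in C^n(A;\H,\H)$, and composing once more with $R(z)\in C^n(A;\H,\G)$ on the left one obtains $R(z)BR(z)\in C^n(A;\H,\G)$.

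It remains to feed this back into the standard characterisation of higher regularity: a bounded map lies in $C^{n+1}(A;\H,\G)$ as soon as it lies in $C^1(A;\H,\G)$ and its strong derivative lies in $C^n(A;\H,\G)$. Since $R(z)\in C^1(A;\H,\G)$ by the base case and its derivative $R(z)BR(z)$ has just been shown to be of class $C^n(A;\H,\G)$, I conclude $R(z)\in C^{n+1}(A;\H,\G)$, completing the induction; the final inclusion $C^{n+1}(A;\H,\G)\subset C^{n+1}(A)$ is immediate from the continuous embedding $\G\hookrightarrow\H$. I expect the main obstacle to lie not in the algebra but in the two-space bookkeeping: one must make the composition rule precise across the various source/target pairs and, above all, verify that the hypothesis $B\in\B(\G,\H)$ genuinely upgrades the $C^1(A)=C^1(A;\H,\H)$-regularity of $R(z)$ to $C^1(A;\H,\G)$, which is exactly what the base case \cite[Thm.~6.3.4.(c)]{ABG} supplies.
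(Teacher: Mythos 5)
Your proposal is correct and follows essentially the same route as the paper: induction on $n$, the identity $[A,R(z)]=R(z)[H,A]R(z)$, the inclusion $C^n\subset C^{n-1}$ to apply the induction hypothesis, the product rule for the classes $C^n(A;\cdot,\cdot)$ (the paper cites \cite[Prop.~5.1.5]{ABG}), and the equivalence between $R(z)\in C^{n+1}(A;\H,\G)$ and $[A,R(z)]\in C^n(A;\H,\G)$. The only cosmetic difference is the base case, which the paper proves directly from the commutator identity together with \cite[Prop.~5.1.2.(b)]{ABG} rather than quoting \cite[Thm.~6.3.4.(c)]{ABG} wholesale.
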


\begin{proof}
We prove the claim by induction on $n$. For $n=0$, one has
$[iH,A]\in\B(\G,\H)\equiv C^0(A;\G,\H)$. It follows from the equality
\begin{equation}\label{step1}
\big[i(H-z)^{-1},A\big]=-(H-z)^{-1}[iH,A](H-z)^{-1}
\end{equation}
and from the inclusion $(H-z)^{-1}\in\B(\H,\G)$ that
$\big[i(H-z)^{-1},A\big]\in\B(\H,\G)$. Then, one infers that
$(H-z)^{-1}\in C^1(A;\H,\G)$ by using \cite[Prop.~5.1.2.(b)]{ABG}.

Now, assume that the statement is true for $n-1\geq 0$, namely,
$[iH,A]\in C^n(A;\G,\H)$ and $(H-z)^{-1}\in C^n(A;\H,\G)$. Then, by taking into
account account \eqref{step1} and the property of regularity for product of operators
stated in \cite[Prop.~5.1.5]{ABG}, one obtains that
$\big[i(H-z)^{-1},A\big]\in C^n(A;\H,\G)$. This is equivalent to the inclusion
$(H-z)^{-1}\in C^{n+1}(A;\H,\G)$, which proves the statement for $n$.
\end{proof}

Usually, the regularity of $H_0$ with respect to $A_0$ is easy to check. On the other
hand, the regularity of $H$ with respect to $A$ is in general rather difficult to
establish, and various perturbative criteria have been developed for that purpose in
the one-Hilbert space setting. Often, a distinction is made between so-called
short-range and long-range perturbations. Roughly speaking, the difference between
these types perturbations is that the two terms of the formal commutator
$[A,H-H_0]=A(H-H_0)-(H-H_0)A$ are treated separately in the former situation while
the commutator $[A,H-H_0]$ is really computed in the latter situation. In the first
case, one usually requires more decay and less regularity, while in the second case
more regularity but less decay are imposed. Obviously, this distinction cannot be as
transparent in the general two-Hilbert spaces setting presented here. Still, a
certain distinction remains, and thus we dedicate to it the following two
complementary sections.

\subsection{Short-range type perturbations}\label{S1}

We show below how the condition ``$H$ is of class $C^1(A)$'' and the assumptions (ii)
and (iii) of Theorem \ref{fonctionrho} can be verified for a class of short-range
type perturbations. Our approach is to derive information on $H$ from some equivalent
information on $H_0$, which is usually easier to obtain. Accordingly, our results
exhibit some perturbative flavor. The price one has to pay is that a compatibility
condition between $A_0$ and $A$ is necessary. For $z\in\C\setminus\R$, we use the
shorter notations $R_0(z):=(H_0-z)^{-1}$, $R(z):=(H-z)^{-1}$ and
\begin{equation}\label{limpide}
B(z):=JR_0(z)-R(z)J\in\B(\H_0,\H).
\end{equation}

\begin{Proposition}\label{C1short}
Let $H_0$ be of class $C^1(A_0)$ and assume that $\D\subset\H$ is a core for $A$
such that $J^*\D\subset\dom(A_0)$. Suppose furthermore that for any
$z\in\C\setminus\R$
\begin{equation}\label{hyp1}
\overline{B(z)A_0\upharpoonright\dom(A_0)}\in\B(\H_0,\H)
\qquad\hbox{and}\qquad\overline{R(z)(JA_0J^*-A)\upharpoonright\D}\in\B(\H).
\end{equation}
Then, $H$ is of class $C^1(A)$.
\end{Proposition}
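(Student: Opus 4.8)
The plan is to check the class $C^1(A)$ via the commutator characterisation recalled in Section~\ref{Sec_one}: the bounded operator $R(z)$ is of class $C^1(A)$, and hence $H$ is of class $C^1(A)$, as soon as the sesquilinear form
$$
(\varphi,\psi)\mapsto\langle A\varphi,R(z)\psi\rangle_\H-\langle R(\bar z)\varphi,A\psi\rangle_\H
$$
on $\dom(A)\times\dom(A)$ is continuous for the topology induced by $\H$ (see \cite[Thm.~6.2.10]{ABG}). Since this form is continuous in the graph topology of $A$ and $\D$ is a core for $A$, it is enough to exhibit a constant $c$ with $\big|\langle A\varphi,R(z)\psi\rangle_\H-\langle R(\bar z)\varphi,A\psi\rangle_\H\big|\le c\,\|\varphi\|_\H\,\|\psi\|_\H$ for all $\varphi,\psi\in\D$.

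First I would record the two algebraic identities on which everything rests. Directly from \eqref{limpide} one has $R(z)J=JR_0(z)-B(z)$, and taking adjoints (with $z$ replaced by $\bar z$) gives $J^*R(z)=R_0(z)J^*-B(\bar z)^*$, together with its companion $J^*R(\bar z)=R_0(\bar z)J^*-B(z)^*$. I would then write $A=JA_0J^*-T$ on $\D$, where $T:=JA_0J^*-A$ is symmetric and, thanks to $J^*\D\subset\dom(A_0)$, is well defined on $\D$. Substituting this decomposition into the commutator form and inserting the above identities splits it, for $\varphi,\psi\in\D$ and with $u:=J^*\varphi$ and $v:=J^*\psi$ in $\dom(A_0)$, into the principal term
$$
\langle A_0u,R_0(z)v\rangle_{\H_0}-\langle R_0(\bar z)u,A_0v\rangle_{\H_0}
$$
plus error terms of the shape $\langle B(\bar z)A_0u,\psi\rangle_\H$, $\langle\varphi,B(z)A_0v\rangle_\H$, $\langle\varphi,R(z)T\psi\rangle_\H$ and the $z\leftrightarrow\bar z$ counterpart of the last one.

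The conclusion then follows term by term. The principal term is precisely the commutator form of $R_0(z)$ with $A_0$; it is bounded in $\H_0$ because $H_0$ is of class $C^1(A_0)$, and $\|u\|_{\H_0}$, $\|v\|_{\H_0}$ are bounded by $\|J\|\,\|\varphi\|_\H$ and $\|J\|\,\|\psi\|_\H$. The two terms containing $B(\cdot)A_0$ are controlled by the first condition in \eqref{hyp1} used for $z$ and for $\bar z$, since $\|\overline{B(z)A_0\upharpoonright\dom(A_0)}\|<\infty$ bounds $\|B(z)A_0v\|_\H$ by a multiple of $\|\psi\|_\H$ and similarly for the $B(\bar z)A_0u$ term. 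The two terms containing $T$ are controlled by the second condition in \eqref{hyp1}, after moving $R(z)$ or $R(\bar z)$ onto the test vectors through their adjoints and using the symmetry of $T$. Collecting these bounds yields the required inequality on $\D\times\D$, and the core property of $\D$ then completes the proof.

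I expect the main obstacle to be the bookkeeping of domains rather than any deep idea: one must verify that each manipulation involving the unbounded operators $A$, $A_0$ and $T$ is legitimate on $\D$ (this is exactly where $J^*\D\subset\dom(A_0)$ enters), that the closures appearing in \eqref{hyp1} genuinely permit one to pass $R(z)T$ and $B(z)A_0$ through the inner products, and that the adjoint identities $J^*R(z)=R_0(z)J^*-B(\bar z)^*$ and $J^*R(\bar z)=R_0(\bar z)J^*-B(z)^*$ are correctly oriented. The reduction of the full commutator form to the $C^1(A_0)$-commutator form of $R_0(z)$ is the conceptual heart; once the remaining pieces are matched to the two hypotheses in \eqref{hyp1}, the estimate is immediate.
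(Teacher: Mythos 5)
Your proposal is correct and follows essentially the same route as the paper's proof: the commutator form of $R(z)$ with $A$ on the core $\D$ is split into the principal term $J[R_0(z),A_0]J^*$ (bounded since $H_0$ is of class $C^1(A_0)$), the $B(\cdot)A_0J^*$ terms (bounded by the first condition in \eqref{hyp1}), and the $R(\cdot)(JA_0J^*-A)$ terms (bounded by the second condition), after which the core property of $\D$ yields the conclusion via \cite[Lemma~6.2.9]{ABG}. The only cosmetic differences are that you work with the sesquilinear form on $\D\times\D$ rather than the quadratic form on the diagonal, and that the symmetry of $T=JA_0J^*-A$ you invoke is not actually needed once $R(\bar z)$ and $R(z)$ are moved across the inner products.
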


\begin{proof}
Take $\psi\in\D$ and $z\in\C\setminus\R$. Then, one gets
\begin{align*}
&\big\langle R(\bar z)\psi,A\psi\big\rangle_\H
-\big\langle A\psi,R(z)\psi \big\rangle_\H\\
&=\big\langle R(\bar z)\psi,A\psi\big\rangle_\H
-\big\langle A\psi,R(z)\psi\big\rangle_\H
-\big\langle\psi,J[R_0(z),A_0]J^*\psi\big\rangle_\H
+\big\langle\psi,J[R_0(z),A_0]J^*\psi\big\rangle_\H\\
&=\big\langle B(\bar z)A_0J^*\psi,\psi,\big\rangle_\H
-\big\langle\psi,B(z)A_0J^*\psi\big\rangle_\H
+\big\langle\psi,J[R_0(z),A_0]J^*\psi\big\rangle_\H\\
&\qquad+\big\langle R(\bar z)(JA_0J^*-A)\psi,\psi\big\rangle_\H
-\big\langle\psi,R(z)(JA_0J^*-A)\psi\big\rangle_\H.
\end{align*}
Now, one has
$$
\big|\big\langle B(\bar z)A_0J^*\psi,\psi,\big\rangle_\H
-\big\langle\psi,B(z)A_0J^*\psi\big\rangle_\H\big|
\le{\rm Const.}\;\!\|\psi\|_\H^2
$$
due to the first condition in \eqref{hyp1}, and one has
$$
\big|\big\langle R(\bar z)(JA_0J^*-A)\psi,\psi\big\rangle_\H
-\big\langle\psi,R(z)(JA_0J^*-A)\psi\big\rangle_\H\big|
\le{\rm Const.}\;\!\|\psi\|^2_\H
$$
due to the second condition in \eqref{hyp1}. Furthermore, since $H_0$ is of class
$C^1(A_0)$ one also has
$$
\big|\big\langle\psi,J[R_0(z),A_0]J^*\psi\big\rangle_\H\big|
\le{\rm Const.}\;\!\|\psi\|^2_\H.
$$
Since $\D$ is a core for $A$, the conclusion then follows from
\cite[Lemma~6.2.9]{ABG}.
\end{proof}

We now show how the assumption (ii) of Theorem \ref{fonctionrho} is verified for a
short-range type perturbation. Note that the hypotheses of the following proposition
are slightly stronger than the ones of Proposition \ref{C1short}, and thus $H$ is
automatically of class $C^1(A)$.

\begin{Proposition}\label{ass2_short}
Let $H_0$ be of class $C^1(A_0)$ and assume that $\D\subset\H$ is a core for $A$ such
that $J^*\D\subset\dom(A_0)$. Suppose furthermore that for any $z\in\C\setminus\R$
\begin{equation}\label{c123}
\overline{B(z)A_0\upharpoonright\dom(A_0)}\in\K(\H_0,\H)
\qquad\hbox{and}\qquad\overline{R(z)(JA_0J^*-A)\upharpoonright\D}\in\K(\H).
\end{equation}
Then, for each $\eta\in C^\infty_{\rm c}(\R)$ the difference of bounded operators
$J[A_0,\eta(H_0)]J^*-[A,\eta(H)]$ belongs to $\K(\H)$.
\end{Proposition}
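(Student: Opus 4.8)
The plan is to reduce the statement about the operators $\eta(H)$ to the resolvent-level information \eqref{c123}, exactly as in the proof of Proposition \ref{C1short}, and then promote the boundedness conclusions there to \emph{compactness} conclusions here. The natural first step is to express the commutator $[A,\eta(H)]$ in terms of resolvent commutators. Since the vector space spanned by $\{(\;\cdot\;-z)^{-1}\}_{z\in\C\setminus\R}$ is dense (in the appropriate $C^1$-type topology) and $\eta$ has compact support, I would use a Helffer--Sj\"ostrand type representation
$$
\eta(H)=\frac{1}{\pi}\int_\C\bar\partial\;\!\tilde\eta(z)\;\!R(z)\,\mathrm{d}x\,\mathrm{d}y,
$$
with $\tilde\eta$ an almost-analytic extension of $\eta$, and the analogous formula for $\eta(H_0)$. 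Commuting $A$ (resp. $A_0$) through the integral reduces $[A,\eta(H)]$ and $[A_0,\eta(H_0)]$ to integrals of $[A,R(z)]$ and $[A_0,R_0(z)]$ against $\bar\partial\tilde\eta$. The payoff is that the rapid decay of $\bar\partial\tilde\eta(z)$ in $|\im z|$ controls the resolvent singularities, so any bound of the form $\|\cdot\|\le{\rm Const.}$ uniform over the support with polynomial growth in $|\im z|^{-1}$ integrates to a finite answer, and—crucially—a family of \emph{compact} operators depending norm-continuously on $z$ integrates to a compact operator, since $\K(\H)$ is closed in $\B(\H)$.

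The heart of the matter is the algebraic identity, already extracted in the proof of Proposition \ref{C1short}, that rewrites the quadratic form of $J[A_0,R_0(z)]J^*-[A,R(z)]$ as a sum of three pieces: a term built from $B(\bar z)A_0J^*$ and $B(z)A_0J^*$, a term built from $R(\bar z)(JA_0J^*-A)$ and $R(z)(JA_0J^*-A)$, and a remainder already appearing inside the $C^1(A_0)$-structure of $H_0$. Under \eqref{c123} the first two families are compact (that is precisely the strengthening from $\B$ to $\K$), so each contributes a compact operator; the third term is handled by noting that $J[A_0,R_0(z)]J^*$ already appears with matching $J,J^*$ on both sides, so after the Helffer--Sj\"ostrand integration it reassembles into $J[A_0,\eta(H_0)]J^*$, and the genuinely new compact contributions are exactly those coming from \eqref{c123}. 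Thus at the level of forms one gets
$$
J[A_0,R_0(z)]J^*-[A,R(z)]\in\K(\H)\qquad\text{for every }z\in\C\setminus\R,
$$
and then integration against $\bar\partial\tilde\eta$ yields the claim.

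I expect the main obstacle to be the \emph{uniform-in-$z$ control} needed to push the compactness through the integral: one must check that the compact operators in \eqref{c123}, which are assumed compact for each fixed $z$, depend on $z$ in a way (norm-continuous, with at most polynomial blow-up as $\im z\to0$) that is compatible with the decay of $\bar\partial\tilde\eta$. This is where the precise domain bookkeeping matters—$J^*\D\subset\dom(A_0)$ and $\D$ a core for $A$—because the closures $\overline{B(z)A_0\upharpoonright\dom(A_0)}$ and $\overline{R(z)(JA_0J^*-A)\upharpoonright\D}$ must be shown to vary suitably with $z$ via the resolvent identities $B(z)-B(z')=(z-z')(\cdots)$ and $R(z)-R(z')=(z-z')R(z)R(z')$. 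Once this continuity is in hand, the remaining manipulations are the routine form-estimates already rehearsed in Proposition \ref{C1short}, now read with ``bounded'' replaced by ``compact,'' together with the elementary facts that $\K(\H)$ is an ideal in $\B(\H)$ and is norm-closed.
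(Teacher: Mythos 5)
Your proposal is correct and follows essentially the same route as the paper: both reduce the claim to the resolvent-level identity already extracted in the proof of Proposition \ref{C1short}, upgrade the two building blocks $B(z)A_0J^*$ and $R(z)(JA_0J^*-A)$ from bounded to compact via \eqref{c123}, use the first resolvent identity to obtain norm bounds with polynomial growth in $|\im z|^{-1}$, and then integrate against a resolvent representation of $\eta(H)$ (the paper invokes \cite[Eq.~6.1.18]{ABG}, which plays the role of your Helffer--Sj\"ostrand formula), concluding by the norm-closedness of $\K(\H)$. The only cosmetic difference is that the paper reassembles the two compact families at the level of $\eta$ first (its identity \eqref{lopp}) and integrates each building block separately, rather than integrating the full difference $J[A_0,R_0(z)]J^*-[A,R(z)]$ at once.
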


\begin{proof}
Take $\psi,\psi'\in\D$ and $z\in\C\setminus\R$. Then, one gets from the proof of
Proposition \ref{C1short} that
\begin{align*}
&\big\langle\psi',J[A_0,R_0(z)]J^*\psi\big\rangle_\H
-\big\langle\psi',[A,R(z)]\psi\rangle_\H\\
&=\big\langle B(\bar z) A_0J^*\psi',\psi,\big\rangle_\H
-\big\langle\psi',B(z)A_0J^*\psi\big\rangle_\H\\
&\qquad+\big\langle R(\bar z)(JA_0J^*-A)\psi',\psi\big\rangle_\H
-\big\langle\psi',R(z)(JA_0J^*-A)\psi\big\rangle_\H.
\end{align*}
By the density of $\D$ in $\H$, one then infers from the hypotheses that
$J[A_0,R_0(z)]J^*-[A,R(z)]$ belongs to $\K(\H)$.

To show the same result for functions $\eta\in C^\infty_{\rm c}(\R)$ instead of
$(\;\!\cdot\;\!-z)^{-1}$, one needs more refined estimates. Taking the first
resolvent identity into account one obtains
$$
B(z)=\big\{1+(z-i)R(z)\big\}B(i)\big\{1+(z-i)R_0(z)\big\}.
$$
Thus, one gets on $\D$ the equalities
\begin{equation}\label{llabol}
B(z)A_0 J^*=\big\{1+(z-i)R(z)\big\}B(i)A_0\big\{1+(z-i)R_0(z)\big\}J^*
+\big\{1+(z-i)R(z)\big\}B(i)(z-i)[R_0(z),A_0]J^*,
\end{equation}
where
$$
[R_0(z),A_0]=\big\{1+(z-i)R_0(z)\big\}R_0(i)[A_0,H_0]R_0(i)\big\{1+(z-i)R_0(z)\big\}.
$$
Obviously, these equalities extend to all of $\H$ since they involve only bounded
operators. Letting $z=\lambda+i\mu$ with $|\mu|\le 1$, one even gets the bound
$$
\big\|B(z)A_0J^*\big\|_{\B(\H)}
\le{\rm Const.}\;\!\bigg(1+\frac{|\lambda+i(\mu-1)|}{|\mu|}\bigg)^4.
$$
Furthermore, since the first and second terms of \eqref{llabol} extend to elements of
$\K(\H)$, the third term of \eqref{llabol} also extends to an element of $\K(\H)$.
Similarly, the operator on $\D$
$$
R(z)(JA_0J^*-A)\equiv\big\{1+(z-i)R(z)\big\}R(i)(JA_0J^*-A)
$$
extends to a compact operator in $\H$, and one has the bound
$$
\big\|R(z)(JA_0J^*-A) \big\|_{\B(\H)}
\le{\rm Const.}\;\!\bigg(1+\frac{|\lambda+i(\mu-1)|}{|\mu|}\bigg).
$$

Now, observe that for any $\eta\in C^\infty_{\rm c}(\R)$ and any $\psi,\psi'\in\D$
one has
\begin{align}
&\big\langle\psi',J[A_0,\eta(H_0)]J^*\psi\big\rangle_\H
-\big\langle\psi',[A,\eta(H)]\psi\big\rangle_\H\nonumber\\
&=\big\langle\big\{J\overline\eta(H_0)
-\overline\eta(H)J\big\}A_0 J^*\psi',\psi\big\rangle_\H
-\big\langle\psi',\big\{J\eta(H_0)-\eta(H)J\big\}
A_0J^*\psi\big\rangle_\H.\nonumber\\
&\qquad+\big\langle\overline\eta(H)(JA_0J^*-A)\psi',\psi\big\rangle_\H
-\big\langle\psi',\eta(H)(JA_0J^*-A)\psi\big\rangle_\H.\label{lopp}
\end{align}
Then, by expressing the operators $\eta(H_0)$ and $\eta(H)$ in terms of their
respective resolvents (using for example \cite[Eq.~6.1.18]{ABG}) and by taking the
above estimates into account, one obtains that
$\big\{J\eta(H_0)-\eta(H)J\big\}A_0J^*$ and $\eta(H)(JA_0J^*-A)$ are equal on $\D$ to
a finite sum of norm convergent integrals of compact operators. Since $\D$ is dense
in $\H$, these equalities between bounded operators extend continuously to equalities
in $\B(\H)$, and thus the statement follows by using \eqref{lopp}.
\end{proof}

\begin{Remark}
As mentioned just after Theorem \ref{fonctionrho}, the requirement
$B(z)\in\K(\H_0,\H)$ for all $z\in\C\setminus\R$ implies the assumption (iii) of
Theorem \ref{fonctionrho}. Since an a priori stronger requirement is imposed in the
first condition of \eqref{c123}, it is likely that in applications the compactness
assumption (iii) will follow from the necessary conditions ensuring the first
condition in \eqref{c123}.
\end{Remark}

Before turning to the long-range case, let us reconsider the above statements in the
special situation where $A= JA_0J^*$. This case deserves a particular attention since
it represents the most natural choice of conjugate operator for $H$ when $A_0$ is a
conjugate operator for $H_0$. However, in order to deal with a well-defined
self-adjoint operator $A$, one needs the following assumption:

\begin{Assumption}\label{bobo}
There exists a set $\D\subset\dom(A_0J^*)\subset \H$ such that $JA_0J^*$ is
essentially self-adjoint on $\D$, with corresponding self-adjoint extension denoted
by $A$.
\end{Assumption}

Assumption \ref{bobo} might be difficult to check in general, but in concrete
situations the choice of the set $\D$ can be quite natural. We now show how the
assumptions of the above propositions can easily be checked under Assumption
\ref{bobo}. Recall that the operator $B(z)$ was defined in \eqref{limpide}.

\begin{Corollary}\label{bobone}
Let $H_0$ be of class $C^1(A_0)$, suppose that Assumption \ref{bobo} holds for some
set $\D\subset\H$, and for any $z\in\C\setminus\R$ assume that
$$
\overline{B(z)A_0\upharpoonright\dom(A_0)}\in\B(\H_0,\H).
$$
Then, $H$ is of class $C^1(A)$.
\end{Corollary}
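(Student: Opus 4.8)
The plan is to reduce Corollary \ref{bobone} to Proposition \ref{C1short}, which already gives the $C^1(A)$-regularity of $H$ once its two hypotheses in \eqref{hyp1} are verified. The set $\D$ supplied by Assumption \ref{bobo} is the natural candidate for the core appearing in Proposition \ref{C1short}, so the first task is to check that $\D$ meets the structural requirements of that proposition, namely that $\D$ is a core for $A$ and that $J^*\D\subset\dom(A_0)$. Both facts should be essentially immediate from Assumption \ref{bobo}: by construction $A$ is the closure of $JA_0J^*\!\upharpoonright\D$, so $\D$ is a core for $A$; and the inclusion $\D\subset\dom(A_0J^*)$ built into Assumption \ref{bobo} forces $J^*\D\subset\dom(A_0)$, since $A_0J^*$ can only be applied to vectors whose image under $J^*$ lies in $\dom(A_0)$.

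Once $\D$ is identified as an admissible core, I would verify the two conditions in \eqref{hyp1}. The first condition, $\overline{B(z)A_0\!\upharpoonright\dom(A_0)}\in\B(\H_0,\H)$, is assumed outright in the hypotheses of the Corollary, so nothing remains to be done there. For the second condition one must show that $\overline{R(z)(JA_0J^*-A)\!\upharpoonright\D}\in\B(\H)$. Here is where Assumption \ref{bobo} pays off in the cleanest possible way: on $\D$ the operator $A$ agrees with $JA_0J^*$ by the very definition of $A$ as the self-adjoint extension of $JA_0J^*\!\upharpoonright\D$. Consequently $(JA_0J^*-A)\!\upharpoonright\D=0$, so the operator $R(z)(JA_0J^*-A)\!\upharpoonright\D$ vanishes identically on $\D$ and trivially extends to the zero operator, which certainly lies in $\B(\H)$.

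With both conditions of \eqref{hyp1} in hand and $\D$ confirmed to be a core for $A$ with $J^*\D\subset\dom(A_0)$, Proposition \ref{C1short} applies verbatim and yields that $H$ is of class $C^1(A)$, which is the claim.

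I do not expect any genuine obstacle in this argument, since the role of $A=JA_0J^*$ is precisely to make the second, awkward condition in \eqref{hyp1} collapse to a triviality; the whole point of introducing Assumption \ref{bobo} is to secure a bona fide self-adjoint $A$ for which $JA_0J^*-A$ vanishes on the chosen core. The only point demanding a little care is the bookkeeping of domains: one must make sure that the expression $JA_0J^*\psi$ is literally meaningful for $\psi\in\D$ (guaranteed by $\D\subset\dom(A_0J^*)$) and that it coincides with $A\psi$ rather than merely agreeing up to closure. This is handled by invoking the definition of $A$ from Assumption \ref{bobo}, so the verification stays entirely at the level of citing the previously established Proposition \ref{C1short} rather than redoing any estimate.
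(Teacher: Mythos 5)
Your proposal is correct and matches the paper's approach exactly: the paper's proof is the one-line statement that all assumptions of Proposition \ref{C1short} are verified, and your write-up simply makes explicit the routine checks behind that claim (that $\D$ is a core for $A$ by essential self-adjointness, that $J^*\D\subset\dom(A_0)$ follows from $\D\subset\dom(A_0J^*)$, and that $JA_0J^*-A$ vanishes on $\D$ so the second condition of \eqref{hyp1} is trivial).
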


\begin{proof}
All the assumptions of Proposition \ref{C1short} are verified.
\end{proof}

\begin{Corollary}\label{compact}
Let $H_0$ be of class $C^1(A_0)$, suppose that Assumption \ref{bobo} holds for some
set $\D\subset\H$, and for any $z\in\C\setminus\R$ assume that
\begin{equation}\label{cond_cor_comp}
\overline{B(z)A_0\upharpoonright\dom(A_0)}\in\K(\H_0,\H).
\end{equation}
Then, for each $\eta\in C^\infty_{\rm c}(\R)$ the difference of bounded operators
$J[A_0,\eta(H_0)]J^*-[A,\eta(H)]$ belongs to $\K(\H)$.
\end{Corollary}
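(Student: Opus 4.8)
The plan is to reduce everything to Proposition \ref{ass2_short}, showing that under Assumption \ref{bobo} together with the compactness hypothesis \eqref{cond_cor_comp} all the assumptions of that proposition are automatically in force. Recall that Proposition \ref{ass2_short} demands that $H_0$ be of class $C^1(A_0)$, that $\D$ be a core for $A$ with $J^*\D\subset\dom(A_0)$, and that both compactness conditions in \eqref{c123} hold. The regularity of $H_0$ is assumed outright, and I would derive the core and domain conditions directly from Assumption \ref{bobo}: since $JA_0J^*$ is essentially self-adjoint on $\D$ with self-adjoint extension $A$, the set $\D$ is by definition a core for $A$; moreover the inclusion $\D\subset\dom(A_0J^*)$ immediately yields $J^*\D\subset\dom(A_0)$.

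Next I would observe that the second condition in \eqref{c123} is trivially satisfied in this situation. Indeed, for every $\psi\in\D$ the self-adjoint operator $A$ acts as the original expression $JA_0J^*$, so that $(JA_0J^*-A)\psi=0$; hence $R(z)(JA_0J^*-A)\upharpoonright\D$ is the zero operator and its closure is $0\in\K(\H)$. The first condition in \eqref{c123}, namely $\overline{B(z)A_0\upharpoonright\dom(A_0)}\in\K(\H_0,\H)$ for all $z\in\C\setminus\R$, coincides verbatim with the hypothesis \eqref{cond_cor_comp}.

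With every assumption of Proposition \ref{ass2_short} verified, the desired conclusion that $J[A_0,\eta(H_0)]J^*-[A,\eta(H)]\in\K(\H)$ for each $\eta\in C^\infty_{\rm c}(\R)$ follows at once. There is no real analytic difficulty here; the single point deserving care is the identification $A=JA_0J^*$ on $\D$, which is precisely the content of Assumption \ref{bobo} and is exactly what makes the otherwise delicate second condition of \eqref{c123} collapse to a triviality.
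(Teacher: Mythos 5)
Your proposal is correct and follows exactly the paper's argument: the paper's proof of Corollary \ref{compact} is precisely the one-line observation that all the assumptions of Proposition \ref{ass2_short} are verified, and your verification of the individual hypotheses (that $\D$ is a core for $A$ by essential self-adjointness, that $J^*\D\subset\dom(A_0)$ follows from $\D\subset\dom(A_0J^*)$, and that the second condition of \eqref{c123} vanishes identically since $A$ extends $JA_0J^*\upharpoonright\D$) is the intended, and correct, justification.
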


\begin{proof}
All the assumptions of Proposition \ref{ass2_short} are verified.
\end{proof}

\begin{Remark}
As mentioned above the choice $A = JA_0J^*$ is natural when $A_0$ is a conjugate
operator for $H_0$. With that respect the second conditions in \eqref{hyp1} and
\eqref{c123} quantify how much one can deviate from this natural choice.
\end{Remark}

The most important consequence of Mourre theory is the obtention of a limiting
absorption principle for $H_0$ and $H$. Rather often, the space defined in terms of
$A_0$ (resp. $A$) in which holds the limiting absorption principle for $H_0$ (resp.
$H$) is not adequate for applications. In \cite[Prop.~7.4.4]{ABG} a method is given
for expressing the limiting absorption principle for $H_0$ in terms of an auxiliary
operator $\Phi_0$ in $\H_0$ more suitable than $A_0$. Obviously, this abstract result
also applies for three operators $H$, $A$ and $\Phi$ in $\H$, but one crucial
condition is that $(H-z)^{-1}\dom(\Phi)\subset\dom(A)$ for suitable $z\in\C$. In the
next lemma, we provide a sufficient condition allowing to infer this information from
similar information on the operators  $H_0$, $A_0$ and $\Phi_0$ in $\H_0$. Note that
$\Phi$ does not need to be of the form $J\Phi_0J^*$ but that such a situation often
appears in applications.

\begin{Lemma}\label{passage}
Let $z\in\C\setminus\{\sigma(H_0)\cup\sigma(H)\}$. Suppose that Assumption \ref{bobo}
holds for some set $\D\subset\H$. Assume that
$$
\overline{B(\bar z) A_0\upharpoonright\dom(A_0)}\in\B(\H_0,\H).
$$
Furthermore, let $\Phi_0$ and $\Phi$ be self-adjoint operators in $\H_0$ and $\H$
satisfying $(H_0-z)^{-1}\dom(\Phi_0)\subset\dom(A_0)$ and
$J^*(\Phi-i)^{-1}-(\Phi_0-i)^{-1}J^*=(\Phi_0-i)^{-1}B$ for some $B\in\B(\H,\H_0)$.
Then, one has the inclusion $(H-z)^{-1}\dom(\Phi)\subset\dom(A)$.
\end{Lemma}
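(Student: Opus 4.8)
The plan is to convert the inclusion into the boundedness criterion for membership in $\dom(A)$ supplied by Assumption \ref{bobo}. Since $A$ is by definition the closure of the essentially self-adjoint operator $T:=JA_0J^*\upharpoonright\D$, one has $A=T^*$, so a vector $\xi\in\H$ lies in $\dom(A)$ if and only if the functional $\D\ni\psi\mapsto\langle\xi,JA_0J^*\psi\rangle_\H$ is continuous for the norm of $\H$. I would therefore fix $\chi\in\H$, note that every element of $\dom(\Phi)=\Ran\big((\Phi-i)^{-1}\big)$ has the form $(\Phi-i)^{-1}\chi$, and aim to show that $\xi:=R(z)(\Phi-i)^{-1}\chi$ satisfies this criterion; arbitrariness of $\chi$ then gives the claim.

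The computational core is to rewrite, for $\psi\in\D$ (so that $J^*\psi\in\dom(A_0)$ by Assumption \ref{bobo}),
$$
\langle\xi,JA_0J^*\psi\rangle_\H
=\big\langle(\Phi-i)^{-1}\chi,R(\bar z)JA_0J^*\psi\big\rangle_\H,
$$
and then to invoke the two-Hilbert-space resolvent relation $R(\bar z)J=JR_0(\bar z)-B(\bar z)$, immediate from the definition \eqref{limpide}. This splits the expression into a ``$B$-term'' $-\big\langle(\Phi-i)^{-1}\chi,B(\bar z)A_0J^*\psi\big\rangle_\H$ and a ``$J$-term'' $\big\langle J^*(\Phi-i)^{-1}\chi,R_0(\bar z)A_0J^*\psi\big\rangle_{\H_0}$. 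The $B$-term is bounded by ${\rm Const.}\,\|\psi\|_\H$ directly from $\overline{B(\bar z)A_0\upharpoonright\dom(A_0)}\in\B(\H_0,\H)$ together with $\|J^*\psi\|_{\H_0}\le\|J\|\,\|\psi\|_\H$; this is precisely why the hypothesis is imposed at $\bar z$.

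For the $J$-term I would use the hypotheses on $\Phi_0$ and $\Phi$. The intertwining identity $J^*(\Phi-i)^{-1}-(\Phi_0-i)^{-1}J^*=(\Phi_0-i)^{-1}B$ gives $J^*(\Phi-i)^{-1}\chi=(\Phi_0-i)^{-1}(J^*\chi+B\chi)\in\dom(\Phi_0)$. Transferring $R_0(\bar z)$ onto the first slot (using $R_0(\bar z)^*=R_0(z)$) produces $\big\langle R_0(z)(\Phi_0-i)^{-1}(J^*\chi+B\chi),A_0J^*\psi\big\rangle_{\H_0}$, whose first entry lies in $R_0(z)\dom(\Phi_0)\subset\dom(A_0)$ by hypothesis. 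Since both entries are now in $\dom(A_0)$, the symmetry of the self-adjoint operator $A_0$ lets me move $A_0$ to the left, yielding $\big\langle A_0R_0(z)(\Phi_0-i)^{-1}(J^*\chi+B\chi),J^*\psi\big\rangle_{\H_0}$, again bounded by ${\rm Const.}\,\|\psi\|_\H$ with a constant depending only on $\chi$.

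I expect the main obstacle to be bookkeeping rather than substance: at each step one must verify that the vectors fed into $A_0$ really lie in $\dom(A_0)$ (for $J^*\psi$ this is Assumption \ref{bobo}, for $R_0(z)(\Phi_0-i)^{-1}(\cdots)$ it is the inclusion $R_0(z)\dom(\Phi_0)\subset\dom(A_0)$), and one must track the switch between $z$ and $\bar z$, noting that $\bar z\notin\sigma(H_0)\cup\sigma(H)$ whenever $z$ is, since these spectra are real. Combining the two bounds shows that $\psi\mapsto\langle\xi,JA_0J^*\psi\rangle_\H$ is $\H$-bounded on $\D$, hence $\xi=R(z)(\Phi-i)^{-1}\chi\in\dom(A)$, which proves $(H-z)^{-1}\dom(\Phi)\subset\dom(A)$.
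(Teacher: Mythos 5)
Your proposal is correct and follows essentially the same route as the paper: both arguments test $\xi=R(z)(\Phi-i)^{-1}\chi$ against $JA_0J^*\psi$ for $\psi$ in the core $\D$, split $R(\bar z)J$ into $JR_0(\bar z)-B(\bar z)$, bound the $B$-term by the hypothesis $\overline{B(\bar z)A_0\upharpoonright\dom(A_0)}\in\B(\H_0,\H)$, and handle the remaining term via the intertwining relation for $(\Phi_0-i)^{-1}$ together with $R_0(z)\dom(\Phi_0)\subset\dom(A_0)$ and the symmetry of $A_0$. The only difference is cosmetic (you place $\xi$ in the left slot and invoke the $T^*$-criterion explicitly, while the paper writes $\langle A\psi,\xi\rangle$ and appeals to essential self-adjointness on $\D$).
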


\begin{proof}
Let $\psi\in\D$ and $\psi'\in\H$. Then, one has
\begin{align*}
&\big\langle A\psi,(H-z)^{-1}(\Phi-i)^{-1}\psi'\big\rangle_\H\\
&=\big\langle\big\{(H-\bar z)^{-1}J-J(H_0-\bar z)^{-1}\big\}A_0J^*\psi,
(\Phi-i)^{-1}\psi'\big\rangle_\H
+\big\langle J(H_0-\bar z)^{-1}A_0J^*\psi,(\Phi-i)^{-1}\psi'\big\rangle_\H\\
&=-\big\langle B(\bar z)A_0J^*\psi,
(\Phi-i)^{-1}\psi'\big\rangle_\H
+\big\langle(H_0-\bar z)^{-1}A_0J^*\psi,(\Phi_0-i)^{-1}J^*\psi'\big\rangle_{\H_0}\\
&\qquad+\big\langle(H_0-\bar z)^{-1}A_0J^*\psi,
(\Phi_0-i)^{-1}B\psi'\big\rangle_{\H_0}.
\end{align*}
So,
$
\big|\big\langle A\psi,(H-z)^{-1}(\Phi-i)^{-1}\psi'\big\rangle_\H\big|
\le{\rm Const.}\;\!\|\psi\|_\H
$,
and thus $(H-z)^{-1}(\Phi-i)^{-1}\psi'\in\dom(A)$, since $A$ is essentially
self-adjoint on $\D$.
\end{proof}

\subsection{Long-range type perturbations}\label{S2}

In the case of a long-range type perturbation, the situation is slightly less
satisfactory than in the short-range case. One reason comes from the fact that one
really has to compute the commutator $[A,H-H_0]$ instead of treating the terms
$A(H-H_0)$ and $(H-H_0)A$ separately. However, a rather efficient method for checking
that ``$H$ is of class $C^1(A)$'' has been put into evidence in
\cite[Lemma.~A.2]{GM08}. We start by recalling this result and then we propose a
perturbative type argument for checking the assumption (ii) of Theorem
\ref{fonctionrho}. Note that there is a missprint in the hypothesis $1$ of
\cite[Lemma~A.2]{GM08}; the meaningless condition $\sup_n\|\chi_n\|_{\dom(H)}<\infty$
has to be replaced by $\sup_n\|\chi_n\|_{\B(\dom(H))}<\infty$.

\begin{Lemma}[Lemma A.2 of \cite{GM08}]
Let $\D\subset\H$ be a core for $A$ such that  $\D\subset\dom(H)$ and $H\D\subset\D$.
Let $\{\chi_n\}_{n\in\N}$ be a family of bounded operators on $\H$ such that
\begin{enumerate}
\item[(i)] $\chi_n\D\subset\D$ for each $n\in\N$, $\slim_{n\to\infty}\chi_n=1$ and
$\,\sup_n\|\chi_n\|_{\B(\dom(H))}<\infty$,
\item[(ii)] for all $\psi\in\D$, one has $\slim_{n\to\infty}A\chi_n\psi=A\psi$,
\item[(iii)] there exists $z\in\C\setminus\sigma(H)$ such that
$\chi_n R(z)\D\subset\D$ and $\chi_nR(\bar z)\D\subset\D$ for each $n\in\N$,
\item[(iv)] for all $\psi\in\D$, one has
$\slim_{n\to\infty}A[H,\chi_n]R(z)\psi=0$ and
$\slim_{n\to\infty}A[H,\chi_n]R(\bar z)\psi=0$.
\end{enumerate}
Finally, assume that for all $\psi\in\D$
$$
\big|\langle A\psi,H\psi\rangle_\H-\langle H\psi,A\psi\rangle_\H\big|
\le{\rm Const.}\;\!\big(\|H\psi\|^2+\|\psi\|^2\big).
$$
Then, $H$ is of class $C^1(A)$.
\end{Lemma}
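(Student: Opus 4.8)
The plan is to reduce the statement to the one–Hilbert space criterion already used in the proof of Proposition \ref{C1short}: since $\D$ is a core for $A$, by \cite[Lemma~6.2.9]{ABG} it suffices to produce a constant such that
\[
\big|\langle A\psi,R(z)\psi\rangle_\H-\langle R(\bar z)\psi,A\psi\rangle_\H\big|
\le{\rm Const.}\;\!\|\psi\|_\H^2,\qquad\psi\in\D .
\]
Here $R(z)=(H-z)^{-1}$, with $z$ the resolvent value provided by hypotheses (iii)--(iv). The genuine difficulty is that the naive estimate only controls the left–hand side by $\|A\psi\|_\H\,\|\psi\|_\H$; the whole point of the argument is to trade the dangerous factor $\|A\psi\|_\H$ against $\|\psi\|_\H$, which is exactly what the final commutator assumption allows, since that assumption is phrased in the graph norm of $H$.

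To carry this out I would fix $\psi\in\D$, set $f:=R(z)\psi$ and $g:=R(\bar z)\psi$, so that $Hf=\psi+zf$ and $Hg=\psi+\bar z g$. By hypothesis (iii) the regularised vectors $\chi_n f$ and $\chi_n g$ lie in $\D\subset\dom(H)\cap\dom(A)$, so the localised commutator
\[
\Theta_n:=\langle A\chi_n g,H\chi_n f\rangle_\H-\langle H\chi_n g,A\chi_n f\rangle_\H
\]
is well defined. Using the uniform bound $\sup_n\|\chi_n\|_{\B(\dom(H))}<\infty$ from (i), the resolvent bound $\|R(z)\psi\|_{\dom(H)}+\|R(\bar z)\psi\|_{\dom(H)}\le{\rm Const.}\;\!\|\psi\|_\H$, and the polarised form of the assumed estimate $|\langle A\varphi,H\varphi\rangle_\H-\langle H\varphi,A\varphi\rangle_\H|\le{\rm Const.}(\|H\varphi\|_\H^2+\|\varphi\|_\H^2)$, I would obtain the key uniform bound $|\Theta_n|\le{\rm Const.}\;\!\|\psi\|_\H^2$.

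It then remains to identify $\lim_n\Theta_n$ with the quantity to be estimated. Inserting $H\chi_n f=\chi_n\psi+z\chi_n f+[H,\chi_n]f$ and the analogous identity for $H\chi_n g$, one rewrites
\[
-\Theta_n=I_n+z\,II_n+III_n,
\]
where
\[
I_n:=\langle\chi_n\psi,A\chi_n f\rangle_\H-\langle A\chi_n g,\chi_n\psi\rangle_\H,\qquad
II_n:=\langle\chi_n g,A\chi_n f\rangle_\H-\langle A\chi_n g,\chi_n f\rangle_\H,
\]
and $III_n:=\langle[H,\chi_n]g,A\chi_n f\rangle_\H-\langle A\chi_n g,[H,\chi_n]f\rangle_\H$. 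Since $\chi_n\psi$, $\chi_n f$, $\chi_n g$ and the commutators $[H,\chi_n]f$, $[H,\chi_n]g$ all belong to $\D\subset\dom(A)$ (using (i), (iii) and $H\D\subset\D$), the symmetry of $A$ gives immediately $II_n=0$. For $I_n$ I would move $A$ onto $\chi_n\psi$, obtaining $I_n=\langle A\chi_n\psi,\chi_n f\rangle_\H-\langle\chi_n g,A\chi_n\psi\rangle_\H$, and then invoke hypothesis (ii), namely $A\chi_n\psi\to A\psi$, together with $\chi_n f\to f$ and $\chi_n g\to g$ from (i), to conclude $I_n\to\langle A\psi,R(z)\psi\rangle_\H-\langle R(\bar z)\psi,A\psi\rangle_\H$.

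The only delicate term is $III_n$, and this is precisely where hypothesis (iv) is needed: the factor $A\chi_n f=A\chi_n R(z)\psi$ is \emph{a priori} unbounded in $n$, so $III_n$ cannot be estimated term by term. The remedy is again the symmetry of $A$, which (since $[H,\chi_n]f,[H,\chi_n]g\in\D$) recasts $III_n$ as $\langle A[H,\chi_n]g,\chi_n f\rangle_\H-\langle\chi_n g,A[H,\chi_n]f\rangle_\H$; hypothesis (iv) states exactly that $A[H,\chi_n]R(z)\psi\to0$ and $A[H,\chi_n]R(\bar z)\psi\to0$, whence $III_n\to0$. Collecting the three limits yields $\Theta_n\to-\big(\langle A\psi,R(z)\psi\rangle_\H-\langle R(\bar z)\psi,A\psi\rangle_\H\big)$, so the required estimate follows from the uniform bound $|\Theta_n|\le{\rm Const.}\;\!\|\psi\|_\H^2$, and \cite[Lemma~6.2.9]{ABG} then gives that $H$ is of class $C^1(A)$. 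I expect the main obstacle to be the bookkeeping of this last step — in particular realising that $A$ must be shifted through the scalar product onto $[H,\chi_n]R(z)\psi$, the only placement for which (iv) supplies control.
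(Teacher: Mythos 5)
Your argument is correct, but note that the paper itself gives no proof of this lemma: it is quoted verbatim from \cite[Lemma~A.2]{GM08}, so there is no internal proof to compare against. Your regularisation scheme --- polarising the diagonal commutator bound to control $\Theta_n=\langle A\chi_n R(\bar z)\psi,H\chi_n R(z)\psi\rangle_\H-\langle H\chi_n R(\bar z)\psi,A\chi_n R(z)\psi\rangle_\H$ uniformly via $\sup_n\|\chi_n\|_{\B(\dom(H))}<\infty$, splitting off the $[H,\chi_n]$ terms, killing them with the symmetry of $A$ and hypothesis (iv), and passing to the limit to verify the resolvent criterion of \cite[Lemma~6.2.9]{ABG} on the core $\D$ --- is precisely the argument of the cited reference, with each domain inclusion ($\chi_n\psi,\chi_n R(z)\psi,[H,\chi_n]R(z)\psi\in\D$) correctly justified from (i), (iii) and $H\D\subset\D$.
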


In the next statement we provide conditions under which the assumption (ii) of
Theorem \ref{fonctionrho} is verified for a long-range type perturbation. One
condition is that for each $z\in\C\setminus\R$ the operator $B(z)$ belongs to
$\K(\H_0,\H)$, which means that the hypothesis (iii) of Theorem \ref{fonctionrho}
is also automatically satisfied. We stress that no direct relation between $A_0$ and
$A$ is imposed; the single relation linking $A_0$ and $A$ only involves the
commutators $[H_0,A_0]$ and $[H,A]$. On the other hand, the condition on $H_0$ is
slightly stronger than just the $C^1(A_0)$-regularity.

\begin{Proposition}\label{ass2_long}
Let $H_0$ be of class $C^1(A_0)$ with $[H_0,A_0]\in\B\big(\dom(H_0),\H_0\big)$ and
let $H$ be of class $C^1(A)$. Assume that the operator $J\in\B(\H_0,\H)$ extends to
an element of $\B\big(\dom(H_0)^*,\dom(H)^*\big)$, and suppose that for each
$z\in\C\setminus\R$ the operator $B(z)$ belongs to $\K(\H_0,\H)$ and that the
difference $J[H_0,A_0]J^*-[H,A]$ belongs to $\K\big(\dom(H),\dom(H)^*\big)$. Then,
for each $\eta\in C^\infty_{\rm c}(\R)$ the difference of bounded operators
$$
J[A_0,\eta(H_0)]J^*-[A,\eta(H)]
$$
belongs to $\K(\H)$.
\end{Proposition}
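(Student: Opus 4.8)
The plan is to reduce the claim to a statement about resolvents and then lift it back to $\eta(H)$ by the device already used at the end of the proof of Proposition \ref{ass2_short}, namely the representation of $\eta(H_0)$ and $\eta(H)$ as norm-convergent integrals of resolvents (\cite[Eq.~6.1.18]{ABG}). Since $H_0$ is of class $C^1(A_0)$ and $H$ of class $C^1(A)$, one has $[A_0,R_0(z)]=R_0(z)[H_0,A_0]R_0(z)$ and $[A,R(z)]=R(z)[H,A]R(z)$, so it suffices to analyse the resolvent-level difference
\[
D(z):=J[A_0,R_0(z)]J^*-[A,R(z)]=JR_0(z)[H_0,A_0]R_0(z)J^*-R(z)[H,A]R(z),
\]
showing that $D(z)\in\K(\H)$ for every $z\in\C\setminus\R$ with $\|D(z)\|_{\B(\H)}$ bounded by a power of $|\im z|^{-1}$. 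The polynomial control then guarantees that the integrals representing $J[A_0,\eta(H_0)]J^*-[A,\eta(H)]$ are norm-convergent integrals of compact operators, hence compact.

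To expose the compact structure of $D(z)$ I would insert $JR_0(z)=R(z)J+B(z)$ and $R_0(z)J^*=J^*R(z)+B(\bar z)^*$. This produces a ``diagonal'' term $R(z)\big(J[H_0,A_0]J^*-[H,A]\big)R(z)$, which lies in $\K(\H)$: the hypothesis places $J[H_0,A_0]J^*-[H,A]$ in $\K(\dom(H),\dom(H)^*)$, and it is flanked by $R(z)\in\B(\H,\dom(H))$ on the right and by $R(z)\in\B(\dom(H)^*,\H)$ on the left. All the remaining terms carry a factor $B(z)$ or $B(\bar z)^*$. Here the hypothesis that $J$ extends to $\B(\dom(H_0)^*,\dom(H)^*)$ enters through its adjoint $J^*\in\B(\dom(H),\dom(H_0))$: it ensures that $B(\bar z)^*=R_0(z)J^*-J^*R(z)$ maps $\H$ boundedly into $\dom(H_0)$, so that $[H_0,A_0]B(\bar z)^*\in\B(\H,\H_0)$ is well defined.

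Two of the remaining terms, $B(z)[H_0,A_0]J^*R(z)$ and $B(z)[H_0,A_0]B(\bar z)^*$, are immediately compact because the outer factor $B(z)$ belongs to $\K(\H_0,\H)$ while the operators to its right are bounded between the relevant spaces. The genuine obstacle is the single mixed term $R(z)J[H_0,A_0]B(\bar z)^*$, in which the only potentially compact factor, $B(\bar z)^*$, occurs as a map into $\dom(H_0)$, where a priori it is merely bounded. To handle it I would invoke the resolvent identity $B(\bar z)^*=-R_0(z)\big(H_0J^*-J^*H\big)R(z)$, whose intertwining defect $H_0J^*-J^*H$ is bounded from $\dom(H)$ to $\H_0$ (a consequence of the $C^1$-regularity and the mapping property of $J$). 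This reduces the compactness of the mixed term to compactness of $\big(H_0J^*-J^*H\big)R(z)$ in $\K(\H,\H_0)$, or equivalently to upgrading the known inclusion $B(\bar z)^*\in\K(\H,\H_0)$ to $B(\bar z)^*\in\K\big(\H,\dom(H_0)\big)$, since $\big(H_0J^*-J^*H\big)R(z)=-(H_0-z)B(\bar z)^*$.

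I expect this last upgrade to be the crux of the argument. Compactness of $B(\bar z)^*$ is granted only with values in $\H_0$, and passing to the finer graph topology of $\dom(H_0)$ is a delicate endpoint issue. The inputs I would try to combine are the compactness of $B(z)$ for \emph{every} $z\in\C\setminus\R$ (not a single one), the factorisation $B(\bar z)^*=-R_0(z)(H_0J^*-J^*H)R(z)$, and the domain mapping of $J$, possibly supplemented by an interpolation argument yielding compactness into intermediate spaces together with a separate regularity estimate to close the endpoint. Once the mixed term is shown to be compact with the requisite bound in $|\im z|^{-1}$---obtained, as in Proposition \ref{ass2_short}, by expressing $B(z)$ through the first resolvent identity---the reduction of the first paragraph delivers the compactness of $J[A_0,\eta(H_0)]J^*-[A,\eta(H)]$ for every $\eta\in C^\infty_{\rm c}(\R)$.
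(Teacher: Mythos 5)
Your decomposition is exactly the one used in the paper (your four cross terms regroup into the paper's three, since $B(z)[H_0,A_0]R_0(z)J^*=B(z)[H_0,A_0]J^*R(z)+B(z)[H_0,A_0]B(\bar z)^*$), and your treatment of the diagonal term, of the two terms carrying $B(z)$ on the left, and of the final passage from resolvents to $\eta(H)$ via norm-convergent integrals with polynomial bounds in $|\im z|^{-1}$ all match the paper. The genuine gap is the mixed term $R(z)J[H_0,A_0]B(\bar z)^*$, which you correctly isolate but leave unresolved. The upgrade you propose to seek, $B(\bar z)^*\in\K\big(\H,\dom(H_0)\big)$ (equivalently $(H_0J^*-J^*H)R(z)\in\K(\H,\H_0)$), does not follow from the stated hypotheses, and no interpolation argument is going to produce it; fortunately it is also not needed.

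The missing observation is that the composition should be read through the dual scale rather than through $\dom(H_0)$. Since the commutator form of $H_0$ with $A_0$ is antisymmetric, the hypothesis $[H_0,A_0]\in\B\big(\dom(H_0),\H_0\big)$ yields by duality an extension $[H_0,A_0]\in\B\big(\H_0,\dom(H_0)^*\big)$. The hypothesis that $J$ extends to an element of $\B\big(\dom(H_0)^*,\dom(H)^*\big)$ --- which you exploit only through its adjoint $J^*\in\B\big(\dom(H),\dom(H_0)\big)$ --- is there precisely to continue this chain: one reads
$$
\H\xrightarrow{\;B(\bar z)^*\;}\H_0\xrightarrow{\;[H_0,A_0]\;}\dom(H_0)^*
\xrightarrow{\;J\;}\dom(H)^*\xrightarrow{\;R(z)\;}\H,
$$
where the first factor is compact (being the adjoint of $B(\bar z)\in\K(\H_0,\H)$) and all the others are bounded. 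Hence $R(z)J[H_0,A_0]B(\bar z)^*\in\K(\H)$ with no endpoint issue, and the remainder of your argument goes through as written.
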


\begin{proof}
By taking the various hypotheses into account one gets for any $z\in\C\setminus\R$
that
\begin{align*}
&J[A_0,R_0(z)]J^*-[A,R(z)]\\
&=JR_0(z)[H_0,A_0]R_0(z)J^*-R(z)[H,A]R(z)\\
&=\big\{JR_0(z)-R(z)J\big\}[H_0,A_0]R_0(z)J^*
+R(z)J[H_0,A_0]\big\{R_0(z)J^*-J^*R(z)\big\}\\
&\qquad+R(z)\big\{J[H_0,A_0]J^*-[H,A]\big\}R(z)\\
&=B(z)[H_0,A_0]R_0(z)J^*+R(z)J[H_0,A_0]B(\bar z)^*
+R(z)\big\{J[H_0,A_0]J^*-[H,A]\big\}R(z),
\end{align*}
with each term on the last line in $\K(\H)$. Now, by taking the first resolvent
identity into account, one obtains
$$
B(z)[H_0,A_0]R_0(z)J^*
=\big\{1+(z-i)R(z)\big\}B(i)\big\{1+(z-i)R_0(z)\big\}[H_0,A_0]R_0(i)
\big\{1+(z-i)R_0(z)\big\}J^*
$$
and
$$
R(z)J[H_0,A_0]B(\bar z)^*
=\big\{1+(z-i)R(z)\big\}R(i)J[H_0,A_0]\big\{1+(z-i)R_0(z)\big\}B(-i)^*
\big\{1+(z-i)R(z) \big\}
$$
as well as
\begin{align*}
&R(z)\big\{J[H_0,A_0]J^*-[H,A]\big\}R(z)\\
&=\big\{1+(z-i)R(z)\big\}R(i)\big\{J[H_0,A_0]J^*-[H,A]\big\}R(i)
\big\{1+(z-i)R(z)\big\}.
\end{align*}
Thus, by letting $z=\lambda+i\mu$ with $|\mu|\le 1$, one gets
the bound
$$
\big\|J[A_0,R_0(z)]J^*-[A,R(z)]\big\|_{\B(\H)}
\le{\rm Const.}\;\!\bigg(1+\frac{|\lambda+i(\mu-1)|}{|\mu|}\bigg)^3.
$$
One concludes as in the proof of Proposition \ref{ass2_short} by expressing
$J[A_0,\eta(H_0)]J^*-[A,\eta(H)]$ in terms of $J[A_0,R_0(z)]J^*-[A,R(z)]$ (using for
example \cite[Eq.~6.2.16]{ABG}), and then by dealing with a finite number of norm
convergent integrals of compact operators.
\end{proof}

As mentioned before the statement, no direct relation between $A_0$ and $A$ has been
imposed, and thus considering the special case $A=JA_0J^*$ is not really relevant.
However, it is not difficult to check how the quantity $J[H_0,A_0]J^*-[H,A]$ looks
like in that special case, and in applications such an approach could be of interest.
However, since the resulting formulas are rather involved in general, we do not
further investigate in that direction.

\section{One illustrative example}\label{Example}
\setcounter{equation}{0}

To illustrate our approach, we present below a simple example for which all the
computations can be made by hand (more involved examples will be presented elsewhere,
like in \cite{IRT11}, where part of the results of the present paper is used). In
this model, usually called one-dimensional Schr\"odinger operator with steplike
potential, the choice of a conjugate operator is rather natural, whereas the
computation of the $\varrho$-functions is not completely trivial due to the
anisotropy of the potential. We refer to \cite{Akt99,AJ07,Chr06,CK85,Ges86} for
earlier works on that model and to \cite{Ric05} for a $n$-dimensional generalisation.

So, we consider in the Hilbert space $\H:=\ltwo(\R)$ the Schr\"odinger operator
$H:=-\Delta+V$, where $V$ is the operator of multiplication by a function
$v\in C(\R;\R)$ with finite limits $v_\pm$ at infinity, \ie
$v_\pm:=\lim_{x\to\pm\infty}v(x)\in\R$. The operator $H$ is self-adjoint on
$\H^2(\R)$, since $V$ is bounded. As a second operator, we consider in the auxiliary
Hilbert space $\H_0:=\ltwo(\R)\oplus\ltwo(\R)$ the operator
$$
H_0:=(-\Delta+\vg)\oplus(-\Delta+\vd),
$$
which is also self-adjoint on its natural domain $\H^2(\R)\oplus\H^2(\R)$. Then, we
take a function $j_+\in C^\infty(\R;[0,1])$ with $j_+(x)=0$ if $x\le1$ and $j_+(x)=1$
if $x\ge2$, we set $j_-(x):=j_+(-x)$ for each $x\in\R$, and we define the
identification operator $J\in\B(\H_0,\H)$ by the formula
$$
J(\varphi_-,\varphi_+):=j_-\varphi_-+j_+\varphi_+,\quad(\varphi_-,\varphi_+)\in\H_0.
$$
Clearly, the adjoint operator $J^*\in\B(\H,\H_0)$ is given by
$J^*\psi=(j_-\psi,j_+\psi)$ for any $\psi\in\H$, and the operator $JJ^*\in\B(\H)$ is
equal to the operator of multiplication by $j_-^2+j_+^2$.

Let us now come to the choice of the conjugate operators. For $H_0$, the most natural
choice consists in two copies of the generator of dilations on $\R$, that is,
$A_0:=(D,D)$ with $D$ the generator of the group
$$
\big(\e^{itD}\psi\big)(x):=\e^{t/2}\psi(\e^t x),\quad\psi\in\S(\R),~t,x\in\R,
$$
where $\S(\R)$ denotes the Schwartz space on $\R$. In such a case, the map \eqref{C1}
with $(H,A)$ replaced by $(H_0,A_0)$ is strongly of class $C^\infty$ in $\H_0$.
Moreover, the $\varrho$-functions can be computed explicitly (see
\cite[Sec.~8.3.5]{ABG} for a similar calculation in an abstract setting):

$$
\widetilde\varrho^{A_0}_{H_0}(\lambda)
=\varrho^{A_0}_{H_0}(\lambda)
=\begin{cases}
+\infty & \hbox{if }\,\lambda<\min\{\vg,\vd\}\\
2\big(\lambda-\min\{\vg,\vd\}\big)
& \hbox{if }\,\min\{\vg,\vd\}\le\lambda<\max\{\vg,\vd\}\\
2\big(\lambda-\max\{\vg,\vd\}\big) & \hbox{if }\,\lambda\ge\max\{\vg,\vd\}.
\end{cases}
$$

For the conjugate operator for $H$, two natural choices exist: either one can use
again the generator $D$ of dilations in $\H$, or one can use the (formal) operator
$JA_0J^*$ which appears naturally in our framework. Since the latter choice
illustrates better the general case, we opt here for this choice and just note that
the former choice would also be suitable and would lead to similar results. So, we
set $\D:=\S(\R)$ and $j:=j_-+j_+$, and then observe that $JA_0J^*$ is well-defined
and equal to
\begin{equation}\label{explicit}
JA_0J^*=jDj
\end{equation}
on $\D$. This equality, the fact that $j$ is of class $C^1(D)$, and
\cite[Lemma~7.2.15]{ABG}, imply that $JA_0J^*$ is essentially self-adjoint on $\D$. We denote by
$A$ the corresponding self-adjoint extension.

We are now in a position for applying results of the previous sections such as
Theorem \ref{fonctionrho}. First, recall that $H_0$ is of class $C^1(A_0)$ and
observe that the assumption (iv) of Theorem \ref{fonctionrho} is satisfied with the
operator $J$ introduced above. Similarly, one easily shows that the assumption (iii)
of Theorem \ref{fonctionrho} also holds. Indeed, as mentioned after the statement of
Theorem \ref{fonctionrho}, the assumption (iii) holds if one shows that
$B(z)\in\K(\H_0,\H)$ for each $z\in\C\setminus\R$. But, for any
$(\varphi_-,\varphi_+)\in\H_0$, a direct calculation shows that
$B(z)(\varphi_-,\varphi_+)=B_-(z)\varphi_-+B_+(z)\varphi_+$, with
$$
B_\pm(z):=(H-z)^{-1}\big\{[-\Delta,j_\pm]+j_\pm(V-v_\pm)\big\}(-\Delta+v_\pm-z)^{-1}
\in\K(\H).
$$
So, one readily concludes that $B(z)\in\K(\H_0,\H)$.

Thus, one is only left with showing the assumption (ii) of Theorem \ref{fonctionrho}
and the $C^1(A)$-regularity of $H$. We first consider a short-range type
perturbation. In such a case, with $A$ defined as above, we know it is enough to
check the condition \eqref{cond_cor_comp} of Corollary \ref{compact}. For that
purpose, we assume the following stronger condition on $v:$
\begin{equation}\label{set1}
\lim_{|x|\to\infty}|x|\big(v(x)-v_\pm\big)=0,
\end{equation}
and observe that for each $(\varphi_-,\varphi_+)\in\S(\R)\oplus\S(\R)$ and
$z\in\C\setminus\R$ we have the equality
$$
B(z)A_0(\varphi_-,\varphi_+)=B_-(z)D\varphi_-+B_+(z)D\varphi_+.
$$
Then, taking into account the expressions for $B_-(z)$ and $B_+(z)$ as well as the
above assumption on $v$, one proves easily that
$\overline{B_\pm(z)D\upharpoonright\dom(D)}\in\K(\H)$, which implies
\eqref{cond_cor_comp}. Collecting our results, we end up with:

\begin{Lemma}[Short-range case]
Assume that $v\in C(\R;\R)$ satisfies \eqref{set1}, then the operator $H$ is of class
$C^1(A)$ and one has $\widetilde\varrho_H^A\ge\widetilde\varrho_{H_0}^{A_0}$. In
particular, $A$ is conjugate to $H$ on $\,\R\setminus\{\vg,\vd\}$.
\end{Lemma}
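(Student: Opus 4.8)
The plan is to verify the four hypotheses of Theorem \ref{fonctionrho} one by one, apply the theorem, and then read off the conjugacy statement from the explicit formula for $\widetilde\varrho_{H_0}^{A_0}$ displayed above. Essentially all the analytic work has already been carried out in the discussion preceding the statement, so the proof of the lemma amounts to assembling the available pieces in the right order.

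First I would recall the facts already established: the operator $H_0$ is of class $C^1(A_0)$; Assumption \ref{bobo} holds with $\D=\S(\R)$, since $JA_0J^*=jDj$ was shown to be essentially self-adjoint on $\D$; Assumptions (iii) and (iv) of Theorem \ref{fonctionrho} are satisfied, with (iii) following from the compactness $B(z)\in\K(\H_0,\H)$; and, under the decay hypothesis \eqref{set1}, one has $\overline{B_\pm(z)D\upharpoonright\dom(D)}\in\K(\H)$ for each $z\in\C\setminus\R$, which is exactly the compactness condition \eqref{cond_cor_comp}.

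Next I would invoke Corollary \ref{compact}: its three hypotheses ($H_0$ of class $C^1(A_0)$, Assumption \ref{bobo}, and \eqref{cond_cor_comp}) are now all in hand, so for every $\eta\in C^\infty_{\rm c}(\R)$ the difference $J[A_0,\eta(H_0)]J^*-[A,\eta(H)]$ lies in $\K(\H)$; this is precisely Assumption (ii) of Theorem \ref{fonctionrho}. Moreover, since the compactness condition \eqref{cond_cor_comp} is \apriori stronger than the mere boundedness required in Corollary \ref{bobone}, the same hypotheses allow me to apply that corollary too, yielding that $H$ is of class $C^1(A)$. Together with the already known $C^1(A_0)$-regularity of $H_0$, this completes Assumption (i), and all four hypotheses of Theorem \ref{fonctionrho} are thereby verified.

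Applying Theorem \ref{fonctionrho} then gives the inequality $\widetilde\varrho_H^A\ge\widetilde\varrho_{H_0}^{A_0}$. For the final assertion, I would inspect the explicit expression for $\widetilde\varrho_{H_0}^{A_0}$ recorded above and observe that it is strictly positive for every $\lambda\in\R\setminus\{\vg,\vd\}$, vanishing only at $\lambda=\min\{\vg,\vd\}$ and $\lambda=\max\{\vg,\vd\}$. Combined with the inequality just obtained, this forces $\widetilde\varrho_H^A(\lambda)>0$ on $\R\setminus\{\vg,\vd\}$, which by definition means that $A$ is conjugate to $H$ at each such point. I do not expect any genuine obstacle in this proof: the entire analytic difficulty was absorbed into the verification of \eqref{cond_cor_comp} under \eqref{set1}, and what remains is a careful bookkeeping of which earlier result supplies which hypothesis.
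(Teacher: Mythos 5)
Your proposal is correct and follows exactly the route the paper takes: the discussion preceding the lemma establishes the $C^1(A_0)$-regularity of $H_0$, Assumption \ref{bobo} (via \eqref{explicit}), hypotheses (iii) and (iv) of Theorem \ref{fonctionrho}, and --- under \eqref{set1} --- the compactness \eqref{cond_cor_comp} through the decomposition $B(z)A_0(\varphi_-,\varphi_+)=B_-(z)D\varphi_-+B_+(z)D\varphi_+$, after which Corollaries \ref{bobone} and \ref{compact} and Theorem \ref{fonctionrho} are applied just as you describe. Your final observation that the explicit formula for $\widetilde\varrho^{A_0}_{H_0}$ is strictly positive off $\{\vg,\vd\}$ is also precisely how the paper obtains the conjugacy statement.
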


We now consider a long-range type perturbation and thus show that the assumptions of
Proposition \ref{ass2_long} hold with $A$ defined as above. For that purpose, we
assume that $v\in C^1(\R;\R)$ and that
\begin{equation}\label{set2}
\lim_{|x|\to\infty}|x|\;\!v'(x)=0.
\end{equation}
Then, a standard computation taking the inclusion $(H-z)^{-1}\D\subset\dom(A)$ into
account shows that $H$ is of class $C^1(A)$ with
\begin{equation}\label{form_com}
[A,H]=\big[j(-i\nabla)\;\!\id\;\!j,-\Delta\big]-ij^2\;\!\id\;\!v'
+\frac i2\big[j^2,-\Delta\big],
\end{equation}
where $\id$ is the function $\R\ni x\mapsto x\in\R$. Then, using \eqref{set2} and
\eqref{form_com}, one infers that $J[H_0,A_0]J^*-[H,A]$ belongs to
$\K\big(\dom(H),\dom(H)^*\big)$. Furthermore, simple considerations show that $J$
extends to an element of $\B\big(\dom(H_0)^*,\dom(H)^*\big)$. These results, together
with the ones already obtained, permit to apply Proposition \ref{ass2_long}, and thus
to get:

\begin{Lemma}[Long-range case]
Assume that $v\in C^1(\R;\R)$ satisfies \eqref{set2}, then the operator $H$ is of
class $C^1(A)$ and one has $\widetilde\varrho_H^A\ge\widetilde\varrho_{H_0}^{A_0}$.
In particular, $A$ is conjugate to $H$ on $\,\R\setminus\{\vg,\vd\}$.
\end{Lemma}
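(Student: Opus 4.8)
The plan is to verify the hypotheses of Proposition \ref{ass2_long} and then combine it with Theorem \ref{fonctionrho}. Several ingredients are already available: $H_0$ is of class $C^1(A_0)$ and $[H_0,A_0]=\big(-2i(-\Delta),-2i(-\Delta)\big)$ manifestly belongs to $\B\big(\dom(H_0),\H_0\big)$; the operators $B(z)$ were shown above to be compact for every $z\in\C\setminus\R$ (so hypothesis (iii) of Theorem \ref{fonctionrho} holds); and assumptions (iii) and (iv) of Theorem \ref{fonctionrho} have already been checked. It thus remains to establish the $C^1(A)$-regularity of $H$ with the commutator formula \eqref{form_com}, the mapping property of $J$ between the dual spaces, and the compactness of $J[H_0,A_0]J^*-[H,A]$ in $\K\big(\dom(H),\dom(H)^*\big)$; this last point is the analytic heart of the matter.

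For the $C^1(A)$-regularity I would use the core $\D=\S(\R)$ for $A$. The inclusion $(H-z)^{-1}\D\subset\dom(A)$, needed to give meaning to the resolvent commutator, follows by commuting the position operator $\id$ through the resolvent, using that $[\,\id,H]=2i(-i\nabla)\in\B\big(\dom(H),\H\big)$. Computing the form $\langle A\psi,H\psi\rangle_\H-\langle H\psi,A\psi\rangle_\H$ on $\D$ from $A=jDj$ and $H=-\Delta+V$, and splitting it into $[jDj,-\Delta]$ and $[jDj,V]=j[D,V]j=-ij^2\,\id\,v'$, one identifies $[A,H]$ with \eqref{form_com} and checks (using $v\in C^1(\R)$ and the boundedness of $\id\,v'$ ensured by \eqref{set2}) that it is bounded from $\dom(H)$ to $\dom(H)^*$. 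Together with the inclusion above, this bounds the resolvent commutator form on the core $\D$, and \cite[Lemma~6.2.9]{ABG}, as in Proposition \ref{C1short}, then yields $H\in C^1(A)$. The extension of $J$ to $\B\big(\dom(H_0)^*,\dom(H)^*\big)$ is routine: $J$ is multiplication by the smooth functions $j_\pm$ with bounded derivatives, hence continuous on every Sobolev scale and, by duality, on $\H^{-2}(\R)\oplus\H^{-2}(\R)$.

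The main obstacle is the compactness of $J[H_0,A_0]J^*-[H,A]$. A direct computation gives $J[H_0,A_0]J^*=-2i\big(j_-(-\Delta)j_-+j_+(-\Delta)j_+\big)$, and using \eqref{form_com} for $[H,A]=-[A,H]$ one writes the difference as the sum of a potential term $-ij^2\,\id\,v'$ and a purely geometric term built only from $-\Delta$, $j$, $j_\pm$ and $\id$. By \eqref{set2} the function $\id\,v'$ is continuous and vanishes at infinity, so multiplication by $j^2\,\id\,v'$ is compact from $\dom(H)$ to $\dom(H)^*$ (equivalently, $(-\Delta+1)^{-1}j^2\,\id\,v'(-\Delta+1)^{-1}$ is compact on $\H$). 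For the geometric term, the crucial observation is that for $x\ge2$ one has $j=j_+=1$ and $j_-=0$ (and symmetrically for $x\le-2$), where the contributions cancel identically because $[(-i\nabla)\,\id,-\Delta]=2i(-\Delta)$ compensates $-2i\,j_+(-\Delta)j_+=-2i(-\Delta)$. Hence the coefficients of the geometric term are supported in the bounded transition region carrying $j'$ and $j''$, so it is a differential operator of order at most two with smooth compactly supported coefficients and therefore compact from $\dom(H)$ to $\dom(H)^*$. Establishing this exact cancellation and the ensuing order/support bookkeeping is the delicate step.

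Once these facts are in place, Proposition \ref{ass2_long} shows that $J[A_0,\eta(H_0)]J^*-[A,\eta(H)]\in\K(\H)$ for every $\eta\in C^\infty_{\rm c}(\R)$, i.e. assumption (ii) of Theorem \ref{fonctionrho}. Together with assumptions (i), (iii) and (iv), Theorem \ref{fonctionrho} gives $\widetilde\varrho_H^A\ge\widetilde\varrho_{H_0}^{A_0}$. Finally, the explicit formula for $\widetilde\varrho_{H_0}^{A_0}$ recalled above is strictly positive for every $\lambda\notin\{\vg,\vd\}$ and vanishes only at $\vg$ and $\vd$; therefore $\widetilde\varrho_H^A(\lambda)>0$ for all $\lambda\in\R\setminus\{\vg,\vd\}$, which is exactly the statement that $A$ is conjugate to $H$ on $\R\setminus\{\vg,\vd\}$.
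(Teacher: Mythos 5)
Your proof is correct and follows the same route as the paper: verify the hypotheses of Proposition \ref{ass2_long} (the $C^1(A)$-regularity of $H$ with the commutator formula \eqref{form_com}, the extension of $J$ to $\B\big(\dom(H_0)^*,\dom(H)^*\big)$, and the compactness of $J[H_0,A_0]J^*-[H,A]$ in $\K\big(\dom(H),\dom(H)^*\big)$), then apply Theorem \ref{fonctionrho} and read off positivity from the explicit formula for $\widetilde\varrho^{A_0}_{H_0}$. The paper compresses the analytic core into ``a standard computation'', and your cancellation of the geometric part outside the transition region of $j_\pm$ together with the $C_0$-decay of $\id\,v'$ from \eqref{set2} is exactly the intended argument.
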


\section{Completeness of the wave operators}\label{Sec_Comp}
\setcounter{equation}{0}

One of the main goal in scattering theory is the proof of the completeness of the
wave operators. In our setting, this amounts to show that the strong limits
\begin{equation}\label{principal}
W_\pm(H,H_0,J):=\slim_{t\to\pm\infty}\e^{itH}J\e^{-itH_0}P_{\rm ac}(H_0)
\end{equation}
exist and have ranges equal to $\H_{\rm ac}(H)$. If the wave operators
$W_\pm(H,H_0,J)$ are partial isometries with initial sets $\H_0^\pm$, this implies in
particular that the scattering operator
$$
S:=W_+(H,H_0,J)^*\;\!W_-(H,H_0,J)
$$
is well-defined and unitary from $\H_0^-$ to $\H_0^+$.

When defining the completeness of the wave operators, one usually requires that
$\H_0^\pm=\H_{\rm ac}(H_0)$ (see for example \cite[Def.~III.9.24]{BW83} or
\cite[Def.~2.3.1]{Yaf92}). However, in applications it may happen that the ranges of
$W_\pm(H,H_0,J)$ are equal to $\H_{\rm ac}(H)$ but that
$\H_0^\pm\neq\H_{\rm ac}(H_0)$. Typically, this happens for multichannel type
scattering processes. In such situations, the usual criteria for completeness, as 
\cite[Prop.~III.9.40]{BW83} or \cite[Thm.~2.3.6]{Yaf92}, cannot be applied. So, we
present below a result about the completeness of the wave operators without assuming
that $\H_0^\pm=\H_{\rm ac}(H_0)$. Its proof is inspired by \cite[Thm.~2.3.6]{Yaf92}.

\begin{Proposition}\label{comp_serge}
Suppose that the wave operators defined in \eqref{principal} exist and are partial
isometries with initial set projections $P_0^\pm$. If there exists
$\widetilde J\in\B(\H,\H_0)$ such that
\begin{equation}\label{tralalere}
W_\pm\big(H_0,H,\widetilde J\big)
:=\slim_{t\to\pm\infty}\e^{itH_0}\widetilde J\e^{-itH}P_{\rm ac}(H)
\end{equation}
exist and such that
\begin{equation}\label{froufrou}
\slim_{t\to\pm\infty}\big(J\widetilde J-1\big)\e^{-itH}P_{\rm ac}(H)=0,
\end{equation}
then the equalities $\Ran\big(W_\pm(H,H_0,J)\big)=\H_{\rm ac}(H)$ hold. Conversely,
if $\Ran\big(W_\pm(H,H_0,J)\big)=\H_{\rm ac}(H)$ and if there exists
$\widetilde J\in\B(\H,\H_0)$ such that
\begin{equation}\label{trainversTokyo}
\slim_{t\to\pm\infty}\big(\widetilde JJ-1\big)\e^{-itH_0}P_0^\pm=0,
\end{equation}
then $W_\pm\big(H_0,H,\widetilde J\big)$ exist and \eqref{froufrou} holds.
\end{Proposition}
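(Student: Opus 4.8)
The plan is to reduce both implications to a few standard facts that hold as soon as the relevant strong limits exist. First, the existence of $W_\pm(H,H_0,J)$ yields the intertwining relation $\e^{isH}W_\pm(H,H_0,J)=W_\pm(H,H_0,J)\e^{isH_0}$ for all $s\in\R$, and likewise for $W_\pm(H_0,H,\widetilde J)$. Second, for $\varphi\in\H_{\rm ac}(H_0)$ the relation $E^H(S)W_\pm(H,H_0,J)\varphi=W_\pm(H,H_0,J)E^{H_0}(S)\varphi$ (valid for every Borel set $S$) shows that the spectral measure of $W_\pm(H,H_0,J)\varphi$ is dominated by that of $\varphi$, whence $\Ran\big(W_\pm(H,H_0,J)\big)\subseteq\H_{\rm ac}(H)$. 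Third, from $W_\pm(H,H_0,J)=W_\pm(H,H_0,J)P_{\rm ac}(H_0)$ one gets $P_0^\pm\le P_{\rm ac}(H_0)$; the partial isometry relation reads $W_\pm(H,H_0,J)^*W_\pm(H,H_0,J)=P_0^\pm$, and completeness is equivalent to $W_\pm(H,H_0,J)W_\pm(H,H_0,J)^*=P_{\rm ac}(H)$.

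For the direct implication I would invoke the chain rule for wave operators \cite{Yaf92}: since both $W_\pm(H_0,H,\widetilde J)$ and $W_\pm(H,H_0,J)$ exist, the product $W_\pm(H,H,J\widetilde J):=\slim_{t\to\pm\infty}\e^{itH}(J\widetilde J)\e^{-itH}P_{\rm ac}(H)$ exists and equals $W_\pm(H,H_0,J)\,W_\pm(H_0,H,\widetilde J)$. The hypothesis \eqref{froufrou} states exactly that $(J\widetilde J-1)\e^{-itH}P_{\rm ac}(H)\to0$ strongly; writing $\e^{itH}(J\widetilde J)\e^{-itH}P_{\rm ac}(H)=\e^{itH}(J\widetilde J-1)\e^{-itH}P_{\rm ac}(H)+P_{\rm ac}(H)$ and using that $\e^{itH}$ is unitary, I obtain $W_\pm(H,H,J\widetilde J)=P_{\rm ac}(H)$. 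Hence $W_\pm(H,H_0,J)\,W_\pm(H_0,H,\widetilde J)=P_{\rm ac}(H)$, so $\H_{\rm ac}(H)\subseteq\Ran\big(W_\pm(H,H_0,J)\big)$; together with the reverse inclusion from the first paragraph this gives $\Ran\big(W_\pm(H,H_0,J)\big)=\H_{\rm ac}(H)$.

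For the converse I would show that $W^*:=W_\pm(H,H_0,J)^*$ is the sought wave operator. Write $W:=W_\pm(H,H_0,J)$, fix $\psi\in\H_{\rm ac}(H)$, and put $\varphi:=W^*\psi$; by completeness $\psi=WW^*\psi=W\varphi$, and $\varphi\in\Ran(P_0^\pm)\subseteq\H_{\rm ac}(H_0)$. Setting $\varphi_t:=\e^{-itH_0}\varphi$ and using intertwining to write $\e^{-itH}\psi=W\varphi_t$, unitarity of $\e^{itH_0}$ gives
$$
\big\|\e^{itH_0}\widetilde J\e^{-itH}\psi-\varphi\big\|
=\big\|\widetilde J(W-J)\varphi_t+(\widetilde JJ-1)\varphi_t\big\|.
$$
The second term tends to $0$ by \eqref{trainversTokyo} (since $P_0^\pm\varphi=\varphi$), and the first tends to $0$ because the mere existence of $W$ forces $\|(J-W)\e^{-itH_0}\varphi\|\to0$ --- multiply $\e^{isH}J\e^{-isH_0}\varphi\to W\varphi$ by the unitary $\e^{-isH}$ and use intertwining --- while $\widetilde J$ is bounded. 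Thus $\e^{itH_0}\widetilde J\e^{-itH}\psi\to\varphi=W^*\psi$, and extending by $P_{\rm ac}(H)$ shows that $W_\pm(H_0,H,\widetilde J)$ exists and equals $W^*$. Finally \eqref{froufrou} follows from the same two limits via $(J\widetilde J-1)\e^{-itH}\psi=J(\widetilde J W\varphi_t-\varphi_t)+(J-W)\varphi_t$.

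The routine estimates above are harmless, so the only genuine obstacle is the exchange of the two independent time limits in the chain-rule step of the direct implication; this is exactly why one appeals to the multiplication theorem rather than manipulating the iterated limit by hand. In the converse, the subtle point is that the identification $W_\pm(H_0,H,\widetilde J)=W^*$ becomes available only after completeness is used to represent each $\psi\in\H_{\rm ac}(H)$ as $W\varphi$, and that the convergence $(J-W)\e^{-itH_0}\varphi\to0$ is not an extra hypothesis but is encoded in the existence of $W$ itself.
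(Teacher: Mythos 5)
Your proposal is correct and follows essentially the same route as the paper: the direct implication uses the chain rule to get $W_\pm(H,H,J\widetilde J)=W_\pm(H,H_0,J)\,W_\pm(H_0,H,\widetilde J)=P_{\rm ac}(H)$ and reads off surjectivity, and the converse runs the same two triangle-inequality estimates, with your $\varphi=W_\pm(H,H_0,J)^*\psi$ playing the role of the paper's preimage $\psi_\pm$ in \eqref{reserved}. The only cosmetic differences are that you conclude $\Ran\big(W_\pm(H,H_0,J)\big)\supseteq\H_{\rm ac}(H)$ directly from the factorization rather than via $\Ker\big(W_\pm(H,H_0,J)^*\big)$ and the orthogonal decomposition, and that you additionally identify $W_\pm\big(H_0,H,\widetilde J\big)$ with $W_\pm(H,H_0,J)^*$ on $\H_{\rm ac}(H)$.
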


\begin{proof}
(i) By using the chain rule for wave operators \cite[Thm.~2.1.7]{Yaf92}, we deduce
from the definitions \eqref{principal}-\eqref{tralalere} that the limits
$$
W_\pm \big(H,H,J\widetilde J\big)
:=\slim_{t\to\pm\infty}\e^{itH}J\widetilde J\e^{-itH}P_{\rm ac}(H)
$$
exist and satisfy
\begin{equation}\label{prod_wave}
W_\pm\big(H,H,J\widetilde J\big)= W_\pm(H,H_0,J)W_\pm\big(H_0,H,\widetilde J\big).
\end{equation}
In consequence, the equality
$$
\slim_{t\to\pm\infty}\big(\e^{itH}J\widetilde J\e^{-itH}P_{\rm ac}(H)
-P_{\rm ac}(H)\big)=0,
$$
which follow from \eqref{froufrou}, implies that
$W_\pm\big(H,H,J\widetilde J\big)P_{\rm ac}(H)=P_{\rm ac}(H)$. This, together with
\eqref{prod_wave} and the equality
$W_\pm\big(H_0,H,\widetilde J\big)=W_\pm\big(H_0,H,\widetilde J\big)P_{\rm ac}(H)$,
gives
$$
W_\pm(H,H_0,J)W_\pm\big(H_0,H,\widetilde J\big)
=W_\pm\big(H,H,J\widetilde J\big)P_{\rm ac}(H)
=P_{\rm ac}(H),
$$
which is equivalent to
$$
W_\pm\big(H_0,H,\widetilde J\big)^*W_\pm(H,H_0,J)^*=P_{\rm ac}(H).
$$
This gives the inclusion
$
\Ker\big(W_\pm(H,H_0,J)^*\big)\subset\H_{\rm ac}(H)^\perp
$,
which together with the fact that the range of a partial isometry is closed imply
that
$$
\H=\Ran\big(W_\pm(H,H_0,J)\big)\oplus\Ker\big(W_\pm(H,H_0,J)^*\big)
\subset\H_{\rm ac}(H)\oplus\H_{\rm ac}(H)^\perp
=\H.
$$
So, one must have $\Ran\big(W_\pm(H,H_0,J)\big)=\H_{\rm ac}(H)$, and the first claim
is proved.

(ii) Conversely, consider $\psi\in\H_{\rm ac}(H)$. Then we know from the hypothesis
$\Ran\big(W_\pm(H,H_0,J)\big)=\H_{\rm ac}(H)$ that there exist
$\psi_\pm\in P_0^\pm\H_0$ such that
\begin{equation}\label{reserved}
\lim_{t\to\pm\infty}\big\|\e^{-itH}\psi-J\e^{-itH_0}P_0^\pm\psi_\pm\big\|_\H=0.
\end{equation}
Together with \eqref{trainversTokyo}, this implies that the norm
\begin{align*}
&\big\|\e^{itH_0}\widetilde J\e^{-itH}\psi-P_0^\pm\psi_\pm \big\|_{\H_0}\\
&\le\big\|\e^{itH_0}\widetilde J
\big(\e^{-itH}\psi-J\e^{-itH_0}P_0^\pm\psi_\pm\big)\big\|_{\H_0}
+\big\|\e^{itH_0}\widetilde JJ\e^{-itH_0}P_0^\pm\psi_\pm
-P_0^\pm\psi_\pm\big\|_{\H_0}\\
&\le{\rm Const.}\;\!\big\|\e^{-itH}\psi-J\e^{-itH_0}P_0^\pm\psi_\pm\big\|_\H
+\big\|\big(\widetilde J J-1\big)\e^{-itH_0}P_0^\pm \psi_\pm \big\|_{\H_0}
\end{align*}
converges to $0$ as $t\to\pm\infty$, showing that the wave operators
\eqref{tralalere} exist.

For the relation \eqref{froufrou}, observe first that \eqref{trainversTokyo} gives
$$
\slim_{t\to\pm\infty}\big(J\widetilde J-1\big)J\e^{-itH_0}P_0^\pm
=\slim_{t\to\pm\infty}J\big(\widetilde JJ-1\big)\e^{-itH_0}P_0^\pm
=0.
$$
Together with \eqref{reserved}, this implies that the norm
\begin{align*}
&\big\|\big(J\widetilde J-1\big)\e^{-itH}\psi\big\|_\H\\
&\le\big\|\big(J\widetilde J-1\big)\big(J\e^{-itH_0}P_0^\pm\psi_\pm
-\e^{-itH}\psi\big)\big\|_\H
+\big\|\big(J\widetilde J-1\big)J\e^{-itH_0}P_0^\pm\psi_\pm\big\|_\H\\
&\le{\rm Const.}\;\!\big\|\e^{-itH}\psi-J\e^{-itH_0}P_0^\pm\psi_\pm\big\|_\H
+\big\|\big(J\widetilde J-1\big)J\e^{-itH_0}P_0^\pm\psi_\pm\big\|_\H
\end{align*}
converges to $0$ as $t \to \pm \infty$, showing that \eqref{froufrou} also holds.
\end{proof}


\def\cprime{$'$}

\end{document}